\newtheorem{Theorem}{Theorem}
\newtheorem{Proposition}{Proposition}
\newtheorem*{Remark}{Remark}
\newtheorem*{Remarks}{Remarks}
\newtheorem*{Square Root Law}{Square Root Law}
\newtheorem{Lemma}{Lemma}
\newtheorem{Fact}{Fact}
\begin{document}
%
\title{Total Variation Distance Based Performance Analysis of Covert Communication over AWGN Channels in Non-asymptotic Regime}
%
%
%

\author{Xinchun~Yu,
  Shuangqin Wei and Yuan~Luo  
        \thanks{This work was supported by China Program of International S\&T Cooperation 2016YFE0100300.}
\thanks{Xinchun Yu and Yuan Luo are with the School of Electronic Information and Electrical Engineering, Shanghai Jiao Tong University, Shanghai 200240, China. Luo is the corresponding author. (e-mail: moonyuyu@sjtu.edu.cn; yuanluo@sjtu.edu.cn).

 Shuangqing Wei is with the Division of Electrical and Computer Engineering
School of Electrical Engineering and Computer Science,
Louisiana State University, Baton Rouge, LA 70803, USA (e-mail: swei@lsu.edu).}
}

\maketitle



%
\IEEEpeerreviewmaketitle

%
%
%
%

\begin{abstract}
This paper investigates covert communication over an additive white Gaussian noise (AWGN) channel in finite block length regime on the assumption of Gaussian codebooks.  We first review some achievability and converse bounds on the throughput under maximal power constraint. From these bounds and the analysis of TVD at the adversary, the first and second asymptotics of covert communication are investigated by the help of some divergences inequalities. Furthermore, the analytic solution of TVD, and approximation expansions which can be easily evaluated with given $\bm{snr}$ (signal noise ratio) are presented. In this way, the proper power level for covert communication can be approximated with given covert constraint of TVD, which leads to more accurate estimation of the power compared with preceding bounds. Moreover, the connection between Square Root Law and TVD is disclosed to be on the numerical properties of incomplete gamma functions. Finally, the convergence rates of TVD for $\bm{snr} = n^{-\tau}$ with $\tau > 0.5$ and $\tau < 0.5$ are studied when the block length tends to infinity, which extends the previous extensively focused work on $\tau = 0.5$. Further elaboration on the effect of such asymptotic characteristics on the primary channel's throughput in finite block regime is also provided. The results will be very helpful for understanding the behavior of the total variation distance and practical covert communication.
\end{abstract}
\begin{IEEEkeywords}
Covert communication, finite block length,  metric of discrimination, total variance, convergence rate.
\end{IEEEkeywords}
\section{Introduction}
Security is very important aspect of wireless communication. Covert communication or communication with low probability of detection (LPD), where it is required that the adversary should not learn whether the legitimate parties are communicating nor not, has been studied in a lot of recent works. Typical  scenarios arise in underwater acoustic communication \cite{Roee Diamant} and dynamic spectrum access in wireless channels, where secondary users attempt to communicate without being detected by primary users or users wish to avoid the attention of regulatory entities \cite{Matthieu R}. The information theory for covert communication was first characterized on AWGN channels in \cite{Boulat A} and DMCs in \cite{Matthieu R}\cite{Ligong Wang}, and later in \cite {Pak Hou Che} and \cite{Abdelaziz} on BSC and MIMO AWGN channels, respectively. It has been shown that the throughput of covert communication follows the following square root law (SRL) \cite{Boulat A}.
\begin{Square Root Law}
 In covert communication,  for any $\varepsilon > 0$, the transmitter is able to transmit $O(\sqrt{n})$ information bits to the legitimate receiver by $n$ channel uses while lower bounding the adversary's sum of probability of detection errors $\alpha + \beta \geq 1- \varepsilon$ if she knows a lower bound of the adversary's noise level ($\alpha$ and $\beta$ are error probabilities of type I and type II in the adversary's hypothesis test). The number of information bits will be $o(\sqrt{n})$ if she doesn't know the lower bound.
\end{Square Root Law}

If $\theta_n$ is denoted to be the $\bm{snr}$ at the main channel, then the maximal number of information bits that can be transmitted by $n$ channel uses is $\frac{1}{2}n\log(1+\theta_n)$  over AWGN channels when the input distribution is Gaussian. From the Square Root Law, the maximal number of information bits by $n$ channel uses is $O(\sqrt{n})$ if a lower bound of the adversary's noise is known, hence we have
$n\theta_n = \omega(1)$,
that is: there exists a constant $n_0 > 0$ such that $1 < n\theta_n$ for any $n \geq n_0$.
Furthermore, we have
$1 \leq \frac{1}{2}n\log(1+\theta_n) = \frac{\frac{1}{2}n\ln(1+ \theta_n)}{\ln2} \sim \frac{n\theta_n}{2\ln2}\sim O(\sqrt{n}).$
The first inequality is ensured by the feasibility of covert communication.
  If we assume that
\begin{equation}\label{as}
\theta_n = n^{-\tau}, \, \ \ \ \, 0 < \tau < 1,
\end{equation}
Square Root Law implies that the appropriate power level is $\tau \geq \frac{1}{2}$ for covert communication in the asymptotic regime. In that case, K-L distance, as a metric of discrimination with respect to the background noise at the adversary will be bounded as $n\rightarrow \infty$. Consequently, the asymptotic capacity ($ \frac{1}{2}n\log(1+n^{-\tau})$) with per channel use is zero.

A number of works focused on improving the communication efficiency by various means, such as using channel uncertainty in \cite{Seonwoo}\cite{Biao He}\cite{Khurram Shahzad}, using jammers in \cite{Sobers}\cite{Tamara V.Sobers} and other methods in \cite{B.A.Bash2}\cite{Soltani}. These methods are discussed in the asymptotic regime.
However, in practical communication, we are more concerned about the behaviors in finite blocklength regime. For example, given a finite block length $n$, how many information bits can be transmitted with a given covert criterion and maximal probability of error $\epsilon$, under which the adversary is not able to determine whether or not the transmitter is communicating effectively. When the channels are discrete memoryless, this question has been answered by Bloch's works \cite{M.Tahmasbi}\cite{M.Tahmasbi2}, where the exact second-order asymptotics of the maximal number of reliable and covert bits are characterized when the discrimination metrics are relative entropy, total variation distance (TVD) and missed detection probability, respectively. In \cite{Yu0} and \cite{Yu1}, one-shot achievability and converse bounds of Gaussian random coding under maximal power constraint are presented, and also the TVD at the adversary is roughly estimated using Pinsker's inequality. There are several reasons for us to adopt Gaussian random codes. First, Gaussian distribution is optimal in both maximizing the mutual information between the input and output ends of the legitimate receiver over AWGN channels in the asymptotic regime and minimizing KL divergence between the output and the background noise at the adversary (Theorem 5 in \cite{Ligong Wang}). It has found applications in secure chaotic spread spectrum communication systems \cite{Alan1}\cite{Alan2}. Second, TVD at the adversary is relatively easy to analyze when the codewords are Gaussian generated (or nearly Gaussian generated) than a determined codebook. In addition, random coding approach can offer us means to attain even greater achievability bounds on the number of decodable codewords. In most of previous works, K-L distance is adopted as the discrimination metric in asymptotic situation because K-L distance is convenient to analyze and compute compared with TVD. However, it is TVD that is directly related to the optimal hypothesis test. Moreover, it does not increase with the blocklength and has range $[0,1]$, hence is a normalized metric of discrimination for two probability measures. TVD is not easily obtained in general settings, and yet its close form is attainable under the assumption of Gaussian input distribution over AWGN channels, which makes it possible for us to investigate it directly with varying block length. Though our previous results provide some characterization of covert communication over AWGN channels, a thorough understanding of it requires further investigation; On one hand, an accurate characterization of the throughput in the finite blocklength regime highly depends on the accurate value of TVD at the adversary instead of its bounds. On the other hand, the direct relationship between the throughput and covert constraint is not established. Moreover, what will happen on the covertness at the adversary when $\tau$ varies in $(0, 1)$ is not fully known.

 In the current work, we will precede with our previous results on Gaussian random coding to characterize the first and second order asymptotics of the throughput. This result will establish the direct relationshp between the covert constraint and the throughput. Moreover, the TVD at the adversary is directly evaluated. Based on that, we further consider the problem in the opposite direction: give an finite block length $n$ and $\bm{snr}$ in scaling law of $n^{-\tau}$ with different $\tau \in (0, 1)$ at the main channel, how much discrimination will it give rise to at the adversary with respect to the background noise and what is its tendency when $n$ goes to infinity.

 To the best of our knowledge, our work for the first time in literature offers a comprehensive investigation about both finite block length behaviors of the throughput and the adversary's TVD. More specifically, the contributions of our work are listed as follows:\footnote{Part of this paper was submitted to ??}
\begin{itemize}

\item With given TVD upper bound $\delta$, we derive sufficient condition and necessary condition for the sending power level. As the counterpart of \cite{M.Tahmasbi2}, the first and second order of asymptotics are shown to be $O(n^{\frac{1}{2}})$ and $O(n^{\frac{1}{4}})$, respectively.
\item Under moderate blocklength assumption, analytic formula of the TVD at the adversary is obtained. From the analytic formula, we show that there is close connection between SRL and the asymptotic behavior of incomplete gamma functions.
\item The analytic formula leads to more accurate approximation of TVD, and further leads to more accurate evaluation of both achievability and converse bounds of the primary throughput, which will be illustrated by numerical results.
\item For the analytical expression, we present its series expansions with different $\bm{snr}$ for convenient evaluation. Numerical results show that they approximate the total variation distance accurately, i.e., we can provide a simple but accurate numerical description of TVD as the discrimination metric at the adversary in covert communication with properly moderate blocklength.
\item When $\tau < \frac{1}{2}$, the convergence rate that TVD at the adversary approaches to $1$ as $n\rightarrow\infty$, is proved to be $O(e^{-\frac{1}{4} n^{1-2\tau}})$. When $\tau > \frac{1}{2}$, the rate that TVD goes to $0$, is proved to be between $O(n^{1-2\tau})$ and $O(n^{\frac{1}{2}(1-2\tau)})$. These convergence rates could be quite useful for not only understanding the behavior of TVD as a metric of discrimination in probability theory, but also the practical design of covert communication.
\end{itemize}
The rest of this paper is arranged as follows. In Section II, we describe the model for covert communication over AWGN channels. In Section III, the hypothesis test at the adversary is introduced. The main results are presented in Section IV, Section V and Section VI.  We then provide numerical results in Section VII. Section VIII concluded the paper.
\section{The Channel Model and the coding Scheme}\label{model}

In this section, the channel model of covert communication over AWGN channels is presented.
An $(n,2^{nR})$ code for the Gaussian covert communication channel consists of a message set $W \in \mathcal{W}=\{1,...,2^{nR}\}$, an encoder at the transmitter Alice $f_n : \mathcal{W}\rightarrow \mathbb{R}^n, w\mapsto x^n$, and a decoder at the legitimate user Bob $g_n : \mathbb{R}^n \rightarrow \mathcal{W}, y^n \rightarrow \hat{w}$. Meanwhile, a detector is at an adversary Willie $h_n : \mathbb{R}^n \rightarrow \{0,1\}, z^n \rightarrow 0/1$. The error probability of the code is defined as $P^n_e =Pr[g_n (f_n(W))\neq W]$.
\begin{figure}
\centering
\includegraphics[width=3.5in]{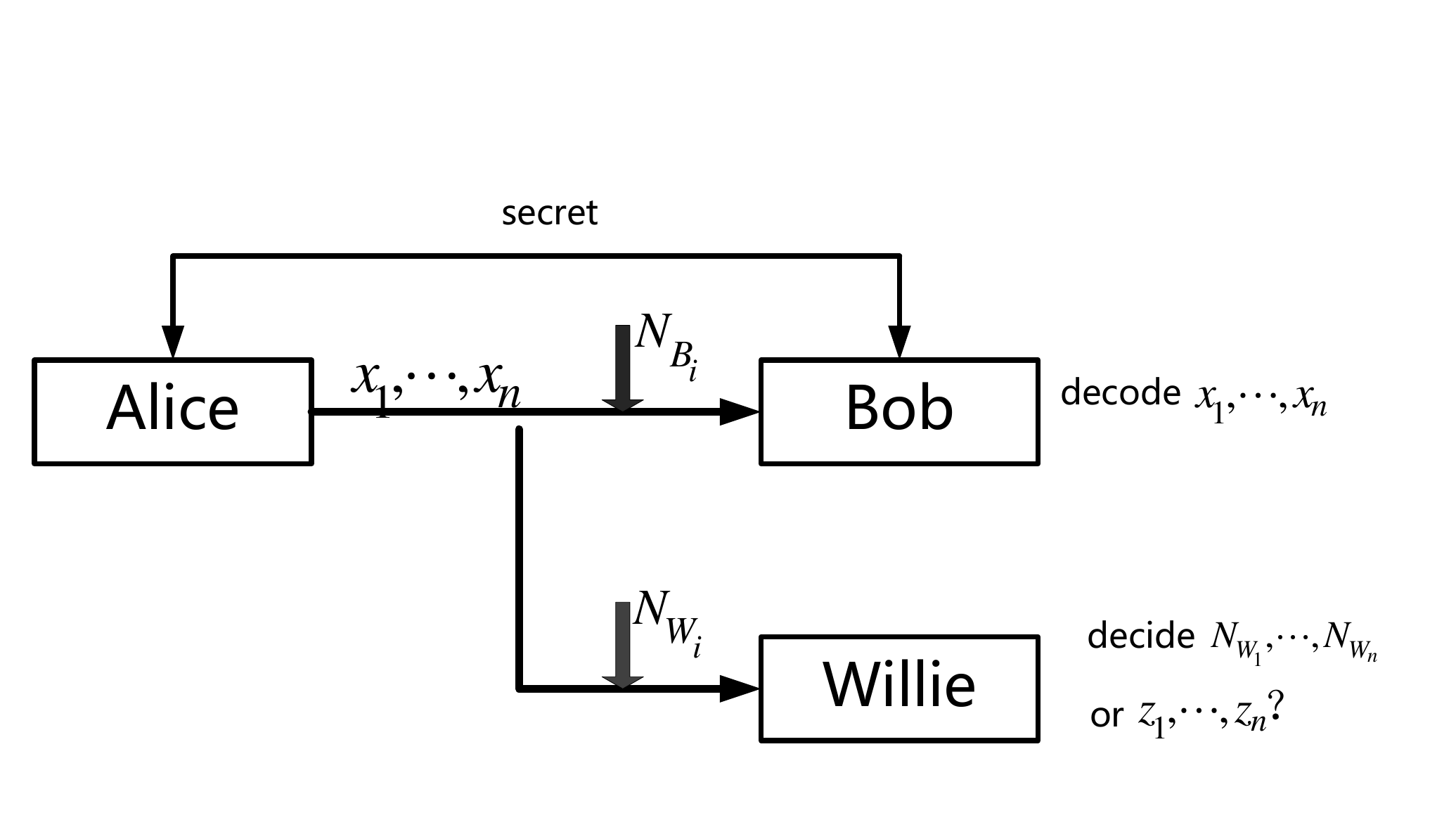}
\caption{The channel model of Gaussian LPD communication in Section \ref{model}}\label{Fig1}
\end{figure}
The channel model is defined by
\begin{eqnarray}
y_i=x_i+N_{B_i},i = 1,...,n \label{channel11}\\
z_i=x_i+N_{W_i},i = 1,...,n   \label{channel2}
\end{eqnarray}
 as shown in Fig.\ref{Fig1}, where $x^n=\{x_i\}_{i=1}^n,y^n=\{y_i\}_{i=1}^n,z^n=\{z_i\}_{i=1}^n$ denote Alice's input codeword, the legitimate user Bob's observation and the adversary Willie's observation, respectively. Variables $\{ N_{B_i},N_{W_i}, i =1, \cdots, n \}$ are independent identically distributed (i.i.d) according to Gaussian distribution $\mathcal{N}(0,\sigma_b^2)$ and $\mathcal{N}(0,\sigma^2_w)$, respectively. Each $x^n$ is a codeword from the $(n,2^{nR})$ codebook with rate $R$. The generation of the codebook will be described later. \footnote{Although the asymptotic capacity of the covert communication is zero, the rate with finite $n$ and nonzero decoding error probability $\epsilon$ could be positive.} Bob wants to decode the received vector $y^n$ with small error probability $P^n_e$. The adversary Willie tries to determine whether Alice is communicating ($h_n=1$) or not ($h_n=0$) by statistical hypothesis test, and the worst performance by Willie in detection is thus to attain the error probability of detection being $\frac{1}{2}$. Thus, Alice, who is active about her choice, is obligated to seek for a code such that $\lim_{n\rightarrow \infty}P^n_e\rightarrow 0$ and $\lim_{n\rightarrow \infty}P(h_n=0)\rightarrow \frac{1}{2}$. There is usually a secret key to assist the communication between Alice and Bob (such as the identification code for the users in spread spectrum communication), which is not the focus of this work. The interested reader may refer to \cite{Matthieu R} and \cite{Boulat A} for more details.
 For calculation convenience, it is furthter assumed that the noise levels at Alice and Willie are the same, i.e., $\sigma_b^2 = \sigma_w^2 =\sigma^2$. Each codeword is randomly selected from a subset of candidate codewords. Each coordinate of these candidates are i.i.d generated from $\mathcal{N}(0,P(n))$ where $P(n)$ is a decreasing function of $n$. The detail of selection will be discussed later. The adversary is aware that the codebook is generated from Gaussian distribution $\mathcal{N}(0, P(n))$ with blocklength $n$ but he doesn't know the specific codebook. Thus, the signal plus noise at Willie follows $\mathcal{N}(0, \sigma_1^2)$ with $\sigma_1^2 = p_n + \sigma^2$ if Alice is transmitting. As previously stated, we denote $\theta_n = \frac{p_n}{\sigma^2}$ as $\bm{snr}$, and the main concern in this work is the situation $\theta_n = n^{-\tau}$ with $\tau \in (0,1)$.

 The hypothesis test of Willie in covert communication is performed on his received signal $z^n$ which is a sample of random vector $Z^n$.
\begin{itemize}
\item
The null hypothesis $H_0$ corresponds to the situation where Alice doesn't transmit and consequently $Z^n$ has output probability distribution $\mathbb{P}_0$.
Otherwise, the received vector $Z^n$ has output probability distribution $\mathbb{P}_1$ which depends on the input distribution.
\item
The rejection of $H_0$ when it is true will lead to a false alarm with probability $\alpha$.
The acceptance of $H_0$ when it is false is considered to be a missed detection  with probability $\beta$.
\end{itemize}
The aim of Alice is to decrease the success probability of Willie's test by increasing $\alpha + \beta$, and meanwhile reliably communicating with Bob. The effect of the optimal test is usually measured by total variation distance $V_T(\mathbb{P}_1, \mathbb{P}_0)$
 which is $1-(\alpha+ \beta)$ \cite{E.Lehmann}.  The total variation distance between two probability measures $P$ and $Q$ on a sigma-algebra $\mathcal {F}$ of subsets of the sample space $\Omega$ is defined as
 \begin{equation}
 \begin{split}
  V_T(P,Q)= \underset{A \in \Omega}{\sup} \left|P(A)-Q(A)\right|.
 \end{split}
 \end{equation}
 When $V_T(\mathbb{P}_1, \mathbb{P}_0)$ is close to $0$, it is generally believed that any detector at Willie can not discriminate the induced output distribution and the distribution of noise effectively, hence can not distinguish whether or not Alice is communicating with Bob.

\section{First and Second Asymptotics}
In this section, we mainly focus on the asymptotics of covert communication over AWGN channel. Subsection A will review some results. Subsection B discusses the TVD at the adversary with the previous constructed codebook. Under that condition, we use divergence inequalities to get the sufficient and necessary condition of the power under covert constraint in Subsection B. In Subsection C, based on previous results, the achievability bound and converse bound under covert constraint are obtained, which lead to the first and second asymptotics of covert communication over AWGN channel. At first, some notions will be introduced as follows.
\begin{itemize}
\item $\mu$ is a parameter to constrain the candidates of codewords, which may depend on $n$.
\item For each $n$,  $\bar{\mathsf{F}}_n \triangleq \{x^n: \mu^2 \cdot nP\leq \|\bm{x}\|_2^2 \leq nP\}$.
\item $P$ is a decreasing function of $n$ and is usually written as $P(n)$.
\end{itemize}

\subsection{Previous Results on the Throughput}
In this section, we first present a coding scheme and corresponding achievability bound from the resulting codebook.
The generation of our codebook is descripted as follows.
  \begin{enumerate}
  \item Firstly, a set of candidates are generated. Each coordinates of these candidates is drawn from i.i.d normal distribution of variance $\mu P(n)$.
   \item Secondly, each codeword is randomly chosen from a subset $\bar{\mathsf{F}}_n$. 
   \end{enumerate}
  The decoding procedure will be sequential threshold decoding and for this codebook, we have the following normal approximation of its size.
\begin{Theorem}\label{AC1}
(Theorem 6 in \cite{Yu1}) For the AWGN channel with noise $\mathcal{N}(0, 1)$ and any $0 <\epsilon < 1$, there exists an $(n, M, \epsilon)$ code (maximal probability of error) chosen from a set $\bar{\mathsf{F}}_n$ of codewords whose coordinates are i.i.d $\sim \mathcal{N}(0, \mu P(n)), 0 < \mu < 1$ and satisfy:
 \begin{enumerate}
 \item $ \mu^2 nP(n) \leq \|\bm{x}\|_2^2 \leq nP(n)$
 \item  $\tau_0 \leq \tau_n(R) \leq \frac{n}{n+1}\epsilon$.
 \end{enumerate}
Let
\begin{equation*}
\begin{split}
&\bm{x} = [\sqrt{R}, \cdots, \sqrt{R}], \, \  \,C_{\mu}(n) = \frac{1}{2}\log(1 + \mu P(n)),\\
&\tau^{\mu}_n(R) =\frac{B_{\mu}(P,R)}{\sqrt{n}},\, \  \, B_{\mu}(P, R) = \frac{6T_{\mu}(P, R)}{\hat{V}_{\mu}(P, R)^{3/2}},\\
&T_{\mu}(P, R) = \mathbb{E}\left[|\frac{\log e}{2(1+\mu P)}[\mu P + 2\sqrt{R}Z_i -\mu PZ_i^2]|^3\right],\\
&\hat{V}_{\mu}(P, R) = \left(\frac{\log e}{2(1+P)}\right)^2(4R+ 2P^2) = V(n) \cdot \left(\frac{2R+ P^2}{2P+ P^2}\right),
\end{split}
\end{equation*}
Then we have (maximal probability of error)
\begin{equation}\label{Achi3}
\begin{split}
\log M^*_m(n, \epsilon, P(n)) \geq   &\underset{0 < \tau_0 < \epsilon}{\sup} \{ nC_{\mu}(n) +  \frac{n(R^* -\mu P(n))\log e}{2(1 + \mu P(n))}  \\
+ \sqrt{n\hat{V}_{\mu}(P(n),R^*)}&Q^{-1}\left(1 - \epsilon + \frac{2B_{\mu}(P(n),R^*)}{\sqrt{n}}\right)\\
 + \log\tau_0  + \frac{1}{2}\log n - &\log \left[\frac{2\log 2}{\sqrt{2\pi \hat{V}_{\mu}(P, R^*)}}+ 4B_{\mu}(P, R^*)\right]\}.
\end{split}
\end{equation}
The quantity $R^*$ satisfies $ x^n_0 = [\sqrt{R^*}, \cdots, \sqrt{R^*}] \in \bar{\mathsf{F}}_n$  and maximizes (\ref{logM1}).
\end{Theorem}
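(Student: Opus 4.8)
The statement is quoted as Theorem~6 of \cite{Yu1}, so I only describe the architecture one would use: a Gaussian random-coding achievability bound with a sequential threshold decoder, sharpened by the Berry--Esseen theorem.

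First I would set up the ensemble as in the codebook construction above --- draw a pool of candidate codewords with i.i.d.\ $\mathcal{N}(0,\mu P(n))$ coordinates, retain those lying in the shell $\bar{\mathsf{F}}_n=\{x^n:\mu^2 nP\le\|\bm{x}\|_2^2\le nP\}$, and have Bob scan the messages in order, declaring the first one whose accumulated information density $\sum_{i=1}^{n}\log\frac{dP_{Y\mid X}(y_i\mid x_i)}{dP_Y^{\mu}(y_i)}$ crosses a threshold $\gamma$, where $P_Y^{\mu}=\mathcal{N}(0,1+\mu P)$ is the output law induced by the generating variance. A standard union/change-of-measure step bounds the \emph{maximal} error probability by $\Pr[\,i^{n}(\bm{x}_0;Y^{n})\le\gamma\,]+(M-1)2^{-\gamma}$ uniformly over the transmitted $\bm{x}_0\in\bar{\mathsf{F}}_n$. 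Since the summands depend on $\bm{x}_0$ only through its energy (and a cross term), the first probability is largest at the equal-energy point $\bm{x}_0=[\sqrt{R^{*}},\dots,\sqrt{R^{*}}]$, with $R^{*}$ the least per-letter energy compatible with membership in $\bar{\mathsf{F}}_n$; this both identifies $R^{*}$ and turns the information density into an i.i.d.\ sum whose per-letter mean is $C_{\mu}(n)+\frac{(R^{*}-\mu P)\log e}{2(1+\mu P)}$, whose variance is $\hat{V}_{\mu}(P,R^{*})$, and whose third absolute moment is controlled by $T_{\mu}(P,R^{*})$ --- exactly the quantities in the statement.

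Next I would apply the Berry--Esseen theorem to $\Pr[\,i^{n}(\bm{x}_0;Y^{n})\le\gamma\,]$: replacing the CDF of the normalized sum by the Gaussian one costs at most $B_{\mu}(P,R^{*})/\sqrt{n}$, which converts the reliability requirement into the choice $\gamma=nC_{\mu}(n)+\frac{n(R^{*}-\mu P)\log e}{2(1+\mu P)}+\sqrt{n\hat{V}_{\mu}}\,Q^{-1}\!\big(1-\epsilon+\tfrac{2B_{\mu}}{\sqrt{n}}\big)$. Choosing $M$ so that the second error term equals $\tau_0$, and then running a continuity/lattice refinement of the threshold (the familiar dispersion-theoretic sharpening, adapted to the power shell) produces the extra $\tfrac12\log n$ and the penalty $\log\!\big[\tfrac{2\log 2}{\sqrt{2\pi\hat{V}_{\mu}(P,R^{*})}}+4B_{\mu}(P,R^{*})\big]$, while the $\sup$ over $\tau_0\in(0,\epsilon)$ optimizes the split between the two error contributions; this is (\ref{Achi3}).

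The hard part will not be the central limit step but the bookkeeping forced by the power shell. Conditioning the i.i.d.\ Gaussian ensemble on $\bar{\mathsf{F}}_n$ distorts both the codeword law and the reference output distribution, so I would need to (i) show $\bar{\mathsf{F}}_n$ is nonempty with probability close to one, so that $R^{*}$ and the equal-energy representative are legitimate, and (ii) quantify that distortion tightly enough to be absorbed into the explicit $\mu$-dependent factors already built into $\hat{V}_{\mu}$ and $B_{\mu}$ rather than leaking into an uncontrolled remainder; one also has to check that the admissible threshold range $\tau_0\le\tau_n(R)\le\frac{n}{n+1}\epsilon$ is consistent with the chosen $\gamma$. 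All of this is carried out in detail in \cite{Yu1}.
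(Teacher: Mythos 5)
The paper never proves this statement: Theorem~\ref{AC1} is imported verbatim as Theorem~6 of \cite{Yu1}, so there is no in-paper argument to compare against, and your decision to sketch the architecture and defer the substantive details to \cite{Yu1} mirrors exactly what the authors do. Your architecture itself --- i.i.d.\ $\mathcal{N}(0,\mu P(n))$ candidates restricted to the shell $\bar{\mathsf{F}}_n$, a threshold decoder built on the information density with respect to the reference output law $\mathcal{N}(\bm{0},(1+\mu P)\bm{I}_n)$ induced by the untruncated ensemble, a Feinstein-type bound to get maximal (not average) error, Berry--Esseen with precisely the moments $\hat{V}_{\mu}$ and $T_{\mu}$, and the optimization over $\tau_0\in(0,\epsilon)$ --- is the right route and is consistent with the structure of the quoted bound (\ref{Achi3}).

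Two specifics in your sketch, however, do not match the statement and would derail the bookkeeping if carried out as written. First, $R^{*}$ is not ``the least per-letter energy compatible with membership in $\bar{\mathsf{F}}_n$'': the theorem defines $R^{*}$ as the equal-energy point $[\sqrt{R^{*}},\dots,\sqrt{R^{*}}]\in\bar{\mathsf{F}}_n$ that \emph{maximizes} (\ref{logM1}). The reduction to an equal-energy representative is exact, because under the spherically symmetric reference output law the distribution of the information density depends on $\bm{x}_0$ only through $\|\bm{x}_0\|^2$; no ``the error probability is largest at the equal-energy point'' claim is needed, and pinning $R^{*}$ to the bottom of the shell ($\mu^2 P$) would produce a different and generally weaker bound than the one stated. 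Second, the truncation to $\bar{\mathsf{F}}_n$ is not absorbed into $\hat{V}_{\mu}$ and $B_{\mu}$ --- those are moments of the \emph{untruncated} i.i.d.\ sum --- but enters as an explicit additive change-of-measure term $\log P_X[\bar{\mathsf{F}}_n]$, visible in (\ref{logM1}) and in the consequence (\ref{achibound1}), which is then argued to be $O(1)$ by sphere hardening for the stated range of $\mu$ and moderately large $n$. Your plan to ``quantify that distortion tightly enough to be absorbed into the explicit $\mu$-dependent factors already built into $\hat{V}_{\mu}$ and $B_{\mu}$'' is the one step that would not go through as described; the clean mechanism is the multiplicative probability factor $P_X[\bar{\mathsf{F}}_n]$ attached by the change of measure, with the Berry--Esseen constants left untouched. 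With those two corrections your outline agrees with the cited proof.
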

When $n$ is sufficiently large, there exists some $\tau_0$ that $\epsilon > \tau_n(R) > \tau_0$ holds. We further have
 \begin{equation}\label{achibound1}
    \begin{split}
     \log M^*_m(n, \epsilon, P(n)) \geq & nC_{\mu}(n) -\sqrt{nV_{\mu}(n)})Q^{-1}(\epsilon) \\
     + \frac{1}{2}\log n  + &\log\tau_0 + \log P_X[\bar{\mathsf{F}}_n] + O(1).
     \end{split}
    \end{equation}
 holds for some $\tau_0$.
\newcounter{TempEqCnt3}
\setcounter{TempEqCnt3}{\value{equation}}
\setcounter{equation}{42}
\begin{figure*}[!t]
\normalsize
\begin{equation}\label{logM1}
\begin{split}
&nC_{\mu}(n) +  \frac{n(R^* -\mu P(n))\log e}{2(1 + \mu P(n))} + \sqrt{n\hat{V}_{\mu}(P(n),R^*)}Q^{-1}\left(1 - \epsilon + \frac{2B_{\mu}(P(n),R^*)}{\sqrt{n}}\right)\\
& + \frac{1}{2} \log n + \log\tau_0 + \log P_X[\bar{\mathsf{F}}_n] - \log \left[\frac{2\log 2}{\sqrt{2\pi \hat{V}_{\mu}(P, R^*)}}+ 4B_{\mu}(P, R^*)\right].
\end{split}
\end{equation}
\hrulefill
\vspace*{4pt}
\end{figure*}
\setcounter{equation}{43}%

The following theorem (Formula (42) in \cite{Yu1}) provides normal approximation of the converse bound for the throughput with blocklength $n$ under maximal power constraint $P(n)$.
\begin{Theorem}\label{Conv4}
For the AWGN channel with $P = P(n)$ which is a decreasing function of $n$, and $\epsilon \in (0,1)$  and maximal power constraint under a given $n$: each codeword $c_i \in X^n$ satisfies $\|c_i \|^2 \leq nP(n)$, we have (maximal probability of error)
\begin{equation}\label{Conv2}
\begin{split}
\log &M^*_m(n, \epsilon, P(n)) \\
\leq & nC(n) -\sqrt{nV(n)})Q^{-1}(\epsilon) + \frac{1}{2}\log n + O(1).
\end{split}
\end{equation}
\end{Theorem}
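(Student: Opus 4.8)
\emph{Proof proposal.} The plan is to derive the converse from the meta-converse (hypothesis-testing converse), specialised to the Gaussian channel with the capacity-achieving output distribution as the auxiliary measure. Fix $Q^*_{Y^n}=\mathcal{N}(0,(1+P(n))I_n)$, which does not depend on the codeword. For an $(n,M,\epsilon)$ code with maximal error probability at most $\epsilon$, the decoding regions $D_1,\dots,D_M$ are disjoint and satisfy $\mathcal{N}(c_i,I_n)(D_i)\ge 1-\epsilon$; since $\sum_i Q^*_{Y^n}(D_i)\le 1$ and $Q^*_{Y^n}(D_i)\ge\beta_{1-\epsilon}(\mathcal{N}(c_i,I_n),Q^*_{Y^n})$ by the Neyman--Pearson lemma, we get $M\cdot\min_i\beta_{1-\epsilon}(\mathcal{N}(c_i,I_n),Q^*_{Y^n})\le 1$, hence
\begin{equation*}
\log M^*_m(n,\epsilon,P(n))\ \le\ \max_{x^n:\ \|x^n\|_2^2\le nP(n)}\ \Bigl(-\log\beta_{1-\epsilon}\bigl(\mathcal{N}(x^n,I_n),\,Q^*_{Y^n}\bigr)\Bigr).
\end{equation*}

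Next I would remove the maximisation over the codeword. Because $Q^*_{Y^n}$ is rotation-invariant, both $\mathcal{N}(x^n,I_n)$ relative to $Q^*_{Y^n}$ and the induced Neyman--Pearson statistic depend on $x^n$ only through $\|x^n\|_2^2$, so $\beta_{1-\epsilon}(\mathcal{N}(x^n,I_n),Q^*_{Y^n})$ is a function of $\|x^n\|_2^2$ alone; a monotonicity argument (the two hypotheses become harder to confuse, hence $\beta$ smaller, as $\|x^n\|_2^2$ increases from $0$ to $nP(n)$) shows that the maximum of $-\log\beta$ is attained on the power shell, so it suffices to take $x^n=x^*_n\triangleq(\sqrt{P(n)},\dots,\sqrt{P(n)})$ with $\|x^*_n\|_2^2=nP(n)$.

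It then remains to expand $-\log\beta_{1-\epsilon}(\mathcal{N}(x^*_n,I_n),Q^*_{Y^n})$. For $x^n=x^*_n$ the log-likelihood ratio $\log\frac{d\mathcal{N}(x^*_n,I_n)}{dQ^*_{Y^n}}(Y^n)=\sum_{i=1}^n L_i$ is a sum of i.i.d.\ terms, each equal in law to $\frac12\log(1+P(n))+\frac{\log e}{2(1+P(n))}\bigl(P(n)+2\sqrt{P(n)}Z-P(n)Z^2\bigr)$ with $Z\sim\mathcal{N}(0,1)$; a direct computation gives mean $C(n)=\frac12\log(1+P(n))$, variance $V(n)=\bigl(\frac{\log e}{2(1+P(n))}\bigr)^2(4P(n)+2P(n)^2)$ (this is exactly the $\hat V_\mu$ of Theorem~\ref{AC1} at $\mu=1$, $R=P(n)$) and, because $\|x^*_n\|_2^2=nP(n)$, relative entropy $D\bigl(\mathcal{N}(x^*_n,I_n)\,\|\,Q^*_{Y^n}\bigr)=nC(n)$. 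Applying a Berry--Esseen expansion under $\mathcal{N}(x^*_n,I_n)$ to locate the Neyman--Pearson threshold, followed by a change-of-measure step under $Q^*_{Y^n}$ (using $dQ^*_{Y^n}/d\mathcal{N}(x^*_n,I_n)=e^{-\sum_i L_i}$ and concentrating the integral near the threshold), yields the refined estimate
\begin{equation*}
-\log\beta_{1-\epsilon}\bigl(\mathcal{N}(x^*_n,I_n),Q^*_{Y^n}\bigr)\ \le\ nC(n)-\sqrt{nV(n)}\,Q^{-1}(\epsilon)+\tfrac12\log n+O(1),
\end{equation*}
which combined with the display above gives the claim.

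The point that needs care is the \emph{uniformity} of the $O(1)$ term: since $P(n)$ is a decreasing function that may tend to $0$, the Berry--Esseen error should not degrade. The key observation is that the Berry--Esseen ratio $T(n)/V(n)^{3/2}$ --- essentially the constant $B_\mu(P,R)$ appearing in Theorem~\ref{AC1} --- stays bounded, because both $T(n)$ and $V(n)^{3/2}$ scale like $P(n)^{3/2}$ as $P(n)\to 0$, so their ratio converges to a finite constant; hence the Berry--Esseen correction and the correction in the change-of-measure step are $O(n^{-1/2})$ with absolute constants. I expect this uniform control, together with the third-order refinement that is needed to pin the dispersion correction at exactly $\tfrac12\log n$ rather than a looser $O(\log n)$, to be the main obstacle; the monotonicity-in-norm reduction is a secondary point that has to be argued rather than assumed.
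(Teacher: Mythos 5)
The paper itself does not prove Theorem~\ref{Conv4}: it cites it as Formula~(42) of the authors' earlier preprint \cite{Yu1}, which is in turn a power-dependent version of the Polyanskiy--Poor--Verd\'u (PPV) AWGN converse \cite{Polyanskiy}. Your proposal follows exactly that meta-converse route, and the main ingredients you supply are the right ones: the spherically symmetric auxiliary output $Q^*_{Y^n}=\mathcal{N}(0,(1+P(n))I_n)$; the fact that $\beta_{1-\epsilon}(\mathcal{N}(x^n,I_n),Q^*_{Y^n})$ depends on $x^n$ only through $\|x^n\|$; the identification of $nC(n)$ and $nV(n)$ as the mean and variance of the tilted log-likelihood sum on the power shell (your per-letter computation is correct, and indeed matches $\hat V_\mu$ at $\mu=1$, $R=P$); the Berry--Esseen plus change-of-measure step to produce the $\tfrac12\log n$ term; and, importantly for this paper where $P(n)\to 0$, the observation that the Berry--Esseen ratio $T(n)/V(n)^{3/2}$ stays bounded because numerator and denominator both scale like $P(n)^{3/2}$, so the $O(1)$ term is uniform in $n$.

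The one genuine gap is the reduction from the maximal-power set $\{\|x^n\|_2^2\le nP(n)\}$ to the power shell $\{\|x^n\|_2^2=nP(n)\}$. You assert ``a monotonicity argument'' that $\beta_{1-\epsilon}(\mathcal{N}(x^n,I_n),Q^*_{Y^n})$ decreases in $\|x^n\|$ and defer it, but this is precisely the step PPV state they could not establish; they sidestep it with the extra-coordinate (Yaglom) device, which is what the cited converse in effect inherits. Concretely: given an $(n,M,\epsilon)$ code under the maximal-power constraint, append the coordinate $c_{n+1}=\sqrt{(n+1)P(n)-\|c\|^2}$ to every codeword to obtain an $(n+1,M,\epsilon)$ code in which every codeword has exactly power $(n+1)P(n)$; then apply the equal-power meta-converse at blocklength $n+1$ and observe that the replacement $n\mapsto n+1$ changes the right-hand side of \eqref{Conv2} only by $C(n)+O(n^{-1/2})+\tfrac12\log\tfrac{n+1}{n}=o(1)$, which is absorbed into the $O(1)$ even as $P(n)\to 0$. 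So the shell reduction is not a secondary point to be ``argued rather than assumed'' later: it is the one place in your outline that needs an actual idea, and that idea is the extra coordinate rather than monotonicity. With that substitution, your sketch becomes the standard proof.
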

Note that the converse bound is irrelevant with any coding scheme.

\subsection{TVD and The Power Level under The Coding Scheme}
Recall the process of generating the codebook: each coordinate of the candidates is generated from i.i.d Gaussian distribution $\mathcal{N}(0, \mu P(n))$ and then each codeword is selected within the region where the radius is between $\sqrt{\mu^2nP(n)}$ and $\sqrt{nP(n)}$ as shown in Figure \ref{Fig2}. The distribution $\bar{P}_{X^n}$ of the codewords is a truncated Gaussian distribution whose density function is
\begin{equation}
\footnotesize
\bm{f}(\bm{x}) = \left\{
\begin{split}
&\frac{1}{\Delta}\frac{1}{(2\pi \mu P(n))^{k/2}}e^{-\frac{\|\bm{x}\|^2}{2\mu P(n)}},    \sqrt{\mu^2nP(n)} \leq \hspace{-0.04in}\|\bm{x}\|\hspace{-0.04in}\leq \hspace{-0.04in}\sqrt{nP(n)}\\
&0,    \ \  \  \  \  \  \  \  \ \  \  \  \  \  \  \  \  \  \  \  \  \  \  \  \  \ \ \  \  \  \  \  \  \  \  \,otherwise,
\end{split}
\right.
\end{equation}
where $\Delta$ is the normalized coefficient
 \begin{equation}
 \Delta = E[1_{\{\bm{x} \in \mathfrak{B}^n_0(\sqrt{nP(n)})\backslash \mathfrak{B}^n_0(\sqrt{\mu^2nP(n)})\}}].
 \end{equation}
The distribution of the candidates $P_{X^n}$ has density function
\begin{equation}
\bm{g}(\bm{x}) = \frac{1}{(2\pi \mu P(n))^{k/2}}e^{-\frac{\|\bm{x}\|^2}{2\mu P(n)}},
\end{equation}

Let $\mathbb{P}_0$ be the n-dimensional noise distribution $\mathcal{N}(\bm{0}, \bm{I}_n)$, $\mathbb{P}_1$ be the output distribution induced by the n-dimensional Gaussian distribution $\mathcal{N}(\bm{0}, \mu P(n)\bm{I}_n)$ and let $\bar{\mathbb{P}}_1$ be the output distribution of the truncated Gaussian distribution $\bar{P}_{X^n}$. From above analysis, TVD at the adversary is written as
\begin{equation}\label{TVD1}
V_T(\bar{\mathbb{P}}_1, \mathbb{P}_0)
\end{equation}
and the power level should be chosen so that $ V_T(\bar{\mathbb{P}}_1, \mathbb{P}_0) \leq \delta$.
It is difficult to get an analytic formula of (\ref{TVD1}). We use the following bounds of TVD at the adversary.
\begin{Fact}
TVD is a distance and satisfies the triangle inequality \cite{Tsybakov}:
\begin{equation}\label{TVD2}
\begin{split}
|V_T(\mathbb{P}_1, \mathbb{P}_0) - V_T(\bar{\mathbb{P}}_1, \mathbb{P}_1)|&\leq V_T(\bar{\mathbb{P}}_1, \mathbb{P}_0) \\ &\leq V_T(\mathbb{P}_1, \mathbb{P}_0) + V_T(\bar{\mathbb{P}}_1, \mathbb{P}_1).
\end{split}
\end{equation}
\end{Fact}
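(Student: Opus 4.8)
The plan is to derive both inequalities in \eqref{TVD2} from the single fact that $V_T$ is a metric on the set of probability measures on $(\Omega,\mathcal F)$; the displayed two-sided bound is then the triangle inequality together with its standard reverse form. First I would verify the triangle inequality $V_T(P,R)\le V_T(P,Q)+V_T(Q,R)$ directly from the supremum definition. For an arbitrary measurable set $A$, insert $Q(A)$ and apply the triangle inequality for real numbers:
\begin{equation*}
|P(A)-R(A)|\le |P(A)-Q(A)|+|Q(A)-R(A)|\le V_T(P,Q)+V_T(Q,R).
\end{equation*}
Since the right-hand side is independent of $A$, taking the supremum over $A$ on the left gives $V_T(P,R)\le V_T(P,Q)+V_T(Q,R)$. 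Non-negativity and symmetry of $V_T$ are immediate from the definition, so $V_T$ is a genuine metric and all subsequent steps are purely formal.

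The right-hand inequality of \eqref{TVD2} is then the triangle inequality instantiated with $P=\bar{\mathbb P}_1$, $Q=\mathbb P_1$, $R=\mathbb P_0$, namely $V_T(\bar{\mathbb P}_1,\mathbb P_0)\le V_T(\bar{\mathbb P}_1,\mathbb P_1)+V_T(\mathbb P_1,\mathbb P_0)$. For the left-hand inequality I would invoke the reverse triangle inequality: applying the triangle inequality as $V_T(\mathbb P_1,\mathbb P_0)\le V_T(\mathbb P_1,\bar{\mathbb P}_1)+V_T(\bar{\mathbb P}_1,\mathbb P_0)$ and, with the roles of $\mathbb P_1$ and $\bar{\mathbb P}_1$ interchanged, $V_T(\bar{\mathbb P}_1,\mathbb P_1)\le V_T(\bar{\mathbb P}_1,\mathbb P_0)+V_T(\mathbb P_0,\mathbb P_1)$, and using symmetry $V_T(\mathbb P_1,\bar{\mathbb P}_1)=V_T(\bar{\mathbb P}_1,\mathbb P_1)$, one obtains
\begin{equation*}
\bigl|V_T(\mathbb P_1,\mathbb P_0)-V_T(\bar{\mathbb P}_1,\mathbb P_1)\bigr|\le V_T(\bar{\mathbb P}_1,\mathbb P_0),
\end{equation*}
which is precisely the remaining bound. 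Combining the two displays proves the Fact.

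There is no real obstacle here; the only point needing a moment's care is the order of operations in the first step — one must first bound $|P(A)-R(A)|$ by the $A$-free quantity $V_T(P,Q)+V_T(Q,R)$ and only then take the supremum over $A$, rather than attempting to take suprema termwise. Everything else is a formal consequence of the metric axioms, which is why the statement is recorded as a Fact with a reference to \cite{Tsybakov} rather than proved in detail.
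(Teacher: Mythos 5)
Your proof is correct: the paper states this as a Fact with only a citation to \cite{Tsybakov}, and your argument is exactly the standard metric-axiom verification (triangle inequality from the supremum definition, plus its reverse form) that the citation implicitly relies on. Nothing is missing, and the caution about bounding $|P(A)-R(A)|$ before taking the supremum over $A$ is handled properly.
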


\begin{Fact}
\begin{equation}
V_T(\bar{\mathbb{P}}_1, \mathbb{P}_1)\leq V_T(\bar{P}_{X^n}, P_{X^n}).
\end{equation}
\end{Fact}
Under the conditions of $\mu \in [0.7, 0.85]$ and $n \geq 400$, $V_T(\bar{P}_{X^n}, P_{X^n})$ will be small for most applications and the effect of truncation is regarded to be negligible due to sphere hardening effect. In the following analysis, it is assumed $\mu$ and $n$ are chosen as stated so that the effect of truncation is constrained under a small threshold. Without loss of generality, it is assumed that the TVD constraint is satisfied if $V_T(\mathbb{P}_1, \mathbb{P}_0) \leq \delta$ so that we can focus on the effect of the power on the asymptotics of the throughput.
\begin{figure}
\centering
\includegraphics[width=3.5in]{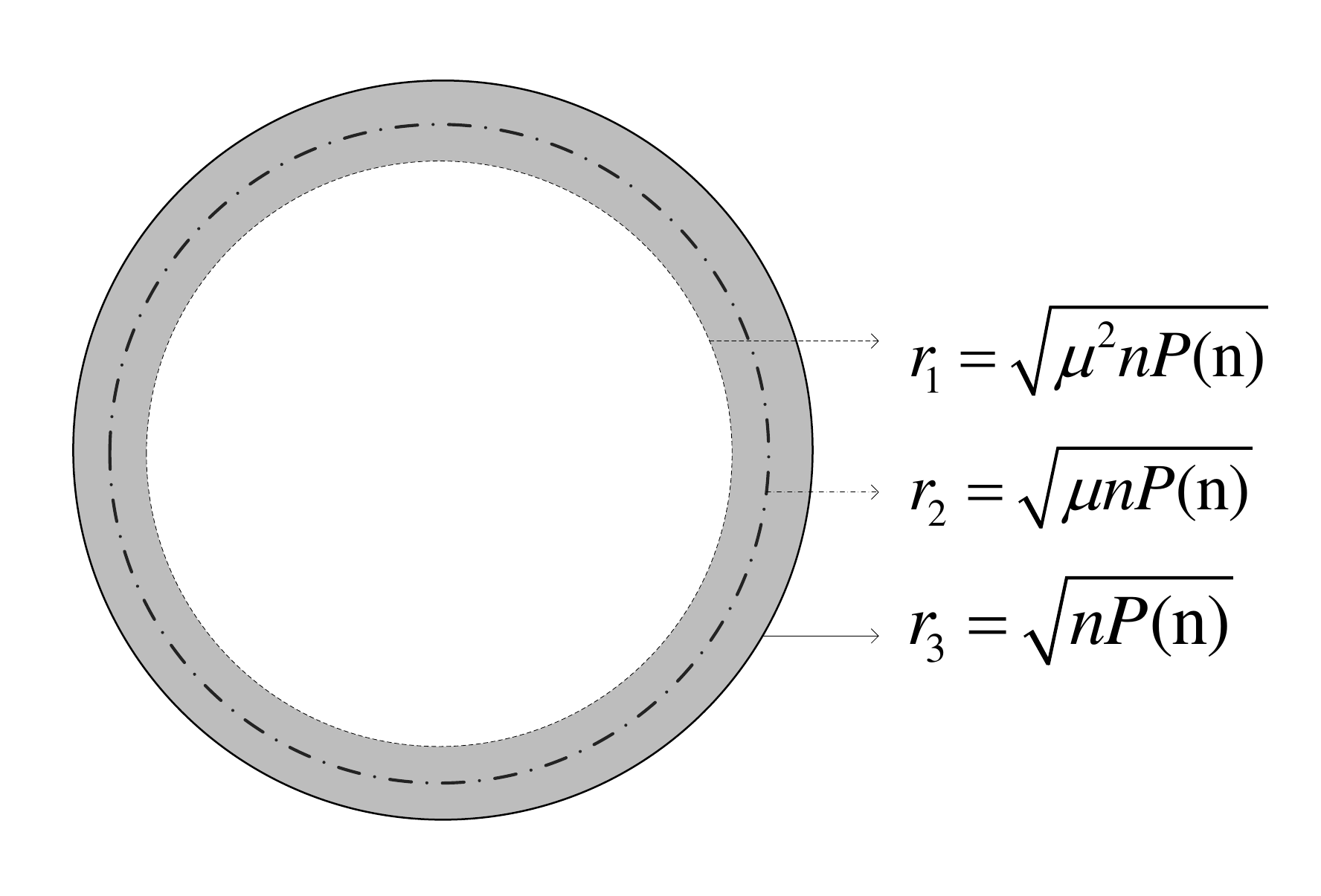}
\caption{The candidates of codewords lies in a subset of n-dimensional sphere: $\mathsf{F}_n \triangleq \{x^n: \mu^2 \cdot nP(n)\leq \|\bm{x}\|_2^2 \leq nP(n) \}$.}\label{Fig22}
\end{figure}

\subsection{Power Constraint and Divergence Inequalities}

First, we introduce some well known bounds for the total variation distance.
\begin{enumerate}
\item K-L distance bound. K-L distance is used as an upper bound of the total variation distance by Pinsker's inequality,
 \begin{equation}
 V_T(P,Q)  \leq \sqrt{\frac{1}{2}D(P\|Q)}.
 \end{equation}
 Since K-L distance is asymmetric, $D(P,Q)$ and $D(Q,P)$ are different, and both are upper bounds of the total variation distance $V_T(P,Q)$. In our case, these two K-L distances are expressed as
 \begin{eqnarray}\label{K-L distance}
 D(\mathbb{P}_1,\mathbb{P}_0) = \frac{n}{2}\left[\theta_n - \ln (1 + \theta_n)\right]\log e.\\
  D(\mathbb{P}_0,\mathbb{P}_1) = \frac{n}{2}\left[\ln(1 + \theta_n) + \frac{1}{1+ \theta_n} -1 \right]\log e
 \end{eqnarray}
 In \cite{Yan2}, it is proved that the latter is smaller than the first, which is always used as a constraint for covert communication in the form of $D(\mathbb{P}_1,\mathbb{P}_0) \leq \delta$ under the premise of Gaussian codebooks. From now on, we denote  $\sqrt{\frac{1}{2}D(\mathbb{P}_1,\mathbb{P}_0)}$ as K-L bound.
\item Hellinger distance. For probability distributions P and Q, the square of the Hellinger distance between them is defined as,
\begin{equation}\label{Hel}
H^2(P,Q) = \frac{1}{2}\int(\sqrt{dP}- \sqrt{dQ})^2.
\end{equation}
In our case, the square of the Hellinger distance is expressed as \cite{Leandro}
\begin{equation}\label{VT}
H^2(\mathbb{P}_1\|\mathbb{P}_0) = 1 - \left(\frac{2\sigma\sigma_1}{\sigma^2 + \sigma_1^2}\right)^{\frac{n}{2}}
\end{equation}
The Hellinger distance $H(P,Q)$ and the total variation distance (or statistical distance) $V_T (P,Q)$ are related as follows
\begin{equation}\label{Hellinq}
H^2(P,Q)\leq V_T(P,Q) \leq \sqrt{2}\cdot H(P,Q).
\end{equation}
Recently, Igal Sason gave an improved bound on the Hellinger distance, see Proposition 2 in \cite{Igal},
\begin{equation}\label{improved}
1 - \sqrt{1 - V_T(P,Q)^2 }\leq H^2(P,Q).
\end{equation}
From (\ref{improved}),
\begin{equation}\label{Helling2}
V_T(P,Q) \leq \sqrt{1 - (1 - H^2(P,Q))^2}.
\end{equation}
The right side of the inequality is a sharper upper bound for the total variation distance than the upper bound in (\ref{Hellinq}) and it is also sharper than K-L bound, as shown in our numerical results section.  We denote it as Hellinger upper bound.
Thus, we have 
\begin{equation}
H^2(\mathbb{P}_1,\mathbb{P}_0)\leq V_T(\mathbb{P}_1, \mathbb{P}_0) \leq \sqrt{1 - (1 - H^2(\mathbb{P}_1,\mathbb{P}_0))^2}
\end{equation}
\end{enumerate}

As we have lower bound and upper-bound on $V_T(\mathbb{P}_1, \mathbb{P}_0)$ as $B_L = H^2(\mathbb{P}_1\|\mathbb{P}_0)$, and $B_U = \sqrt{1 - (1 - H^2(\mathbb{P}_1\|\mathbb{P}_0))^2}$, i..e. $B_L \leq V_T(\mathbb{P}_1\|\mathbb{P}_0)) \leq B_U$.

(1) Let $B_L = \delta$, from which we get a power $P_{NEC}$ (i.e. necessary condition for the power). If $P > P_{NEC}$, it is impossible to achieve $V_T(\mathbb{P}_1\|\mathbb{P}_0) \leq \delta$. However, if $P \leq P_{NEC}$, we don't necessarily achieve $V_T(\mathbb{P}_1\|\mathbb{P}_0))\leq \delta $, the corresponding throughput is $\log M^*_{NEC}$. It is unlikely to attach a larger rate than this one given our TVD constraint. From (\ref{VT}) we get
 \begin{equation}
 \begin{split}
 &H^2(\mathbb{P}_1\|\mathbb{P}_0) = \delta \\
 \iff &\frac{4\sigma^2(\sigma^2+ p_n)}{(2\sigma^2 + p_n)^2} = (1- \delta)^{\frac{4}{n}}\\
 \iff &\frac{4(1+ \theta_n)}{4 + 4\theta_n + \theta_n^2} = (1-\delta)^{\frac{4}{n}}.
 \end{split}
 \end{equation}
 Denote $\eta_n = 1 + \theta_n$ and $y = \frac{1}{4}(1-\delta)^{\frac{4}{n}}$, we have
 \begin{equation}
 \frac{\eta_n}{(1 + \eta_n)^2} = y.
 \end{equation}
  Solving the above equation, we get
 \begin{equation}\label{Pnec}
 P_{NEC} = (\frac{1-2y + \sqrt{1-4y}}{2y} - 1) \cdot \sigma^2.
 \end{equation}

(2) Let $B_U = \delta$, from which we solve and find $P_{SUF}$, which suggests: if $P  \leq P_{SUF}$, we for sure can achieve $V_T(\mathbb{P}_1\|\mathbb{P}_0)\leq \delta$, but it might be too conservative to use such power. Thus, the corresponding maximal throughput is $\log M^*_{SUF}$, which is smaller than $\log M^*_{NEC}$. The actual power $P_n$ to meet the constraint of $V_T(\mathbb{P}_1\|\mathbb{P}_0) \leq \delta$, can be attained by setting $V_T(\mathbb{P}_1\|\mathbb{P}_0)) = \delta$, which should be between these two bounds, i.e.
$ P_{SUF} \leq P_n \leq P_{NEC}$.
 \begin{equation}
 \begin{split}
 &\sqrt{1 - (1 - H^2(\mathbb{P}_1\|\mathbb{P}_0))^2} = \delta^2\\
 \iff &\frac{4\sigma^2(\sigma^2+ p_n)}{(2\sigma^2 + p_n)^2} = (1- \delta)^{\frac{4}{n}}\\
 \iff &\frac{4(1+ \theta_n)}{4 + 4\theta_n + \theta_n^2} = (1-\delta)^{\frac{4}{n}}.
 \end{split}
 \end{equation}
 Denote $\eta_n = 1 + \theta_n$ as above and $y_0 = \frac{1}{4}(1-\delta^2)^{\frac{2}{n}}$, and we have
 \begin{equation}\label{Psuf}
 P_{SUF} = (\frac{1-2y_0 + \sqrt{1-4y_0}}{2y_0} - 1) \cdot \sigma^2.
 \end{equation}

 If the average power of the sending signal is smaller than $P_{SUF}$ , it is certain that $V_T(\mathbb{P}_1\|\mathbb{P}_0))\leq \delta$ will be satisfied. If the average power of the sending signal is larger than $P_{NEC}$, it is certain that $V_T(\mathbb{P}_1\|\mathbb{P}_0))\geq \delta$.
\subsection{First and Second order Asymptotics of the Maximal Thoughput}
 From the previous analysis, the results on the power requirement are applied in the achievability and converse bounds on the throughput over AWGN channel, then we can get the achievability and converse bounds on the throughput under convert constraint. If we assume that $n$ is sufficiently large, then the formula (\ref{achibound1}) could be used to characterized the first and second order asymptotics. More specifically,
\begin{enumerate}
\item
  If the quantity $P(n)$ in the achievability bound (\ref{achibound1}) is substituted by $P_{SUF}$ (\ref{Psuf}), then an achievability bound $\log M_{SUF}^*(n, \epsilon)$on the maximal throughput is obtained.
\item
  If the quantity $P(n)$ in the converse bound (\ref{Conv2}) is substituted by $P_{NEC}$ (\ref{Pnec}), then a converse bound $\log M_{NEC}^*(n, \epsilon)$ on the maximal throughput is obtained.
\end{enumerate}
The details are presented in the following theorem.
\begin{Theorem}\label{Theorem3}
 For covert communication over AWGN channel with average decoding error probability $\epsilon$ and total variation distance constraint $V_T(\mathbb{P}_1\|\mathbb{P}_0))\leq \delta$ at the adversary, the maximal throughput should satisfy:
 \begin{equation}\label{throughputupper2}
 \begin{split}
 &\log M_{NEC}^*(n, \epsilon,\delta) \leq  n\log \left[\frac{1-2y + \sqrt{1-4y}}{2y}\right] \\
 &-\sqrt{\frac{n\log^2 e}{2}\left[1 - \frac{1}{\left[\frac{1-2y_0 + \sqrt{1-4y_0}}{2y_0}\right]^2}\right]} Q^{-1}(\epsilon)+ O(\log n),
 \end{split}
 \end{equation}
 \begin{equation}\label{throughputlower2}
 \begin{split}
 &\log M_{SUF}^*(n, \epsilon,\delta) \geq  n\log \left[\frac{1-2y_0 + \sqrt{1-4y_0}}{2y_0}\right]\\
 &-\sqrt{\frac{n\log^2 e}{2}\left[1 - \frac{1}{\left[\frac{1-2y + \sqrt{1-4y}}{2y}\right]^2}\right]} Q^{-1}(\epsilon)+ O(\log n)
 \end{split}
 \end{equation}
 where $y = \frac{1}{4}(1-\delta)^{\frac{4}{n}}$ and $y_0 = \frac{1}{4}(1-\delta^2)^{\frac{2}{n}}$.
 Moreover, the first term of the maximal throughput under TVD constraint $\delta$: $\log M^*(n, \epsilon, \delta)$ is of $O(n^{\frac{1}{2}})$, and the second term is of $O(n^{\frac{1}{4}})$.
 \end{Theorem}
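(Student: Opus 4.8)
The strategy is to transport the power sandwich obtained in Section~III-C into the finite-blocklength bounds recalled in Section~III-A. Recall that the actual power $P_n$ needed to meet $V_T(\mathbb{P}_1\|\mathbb{P}_0)=\delta$ satisfies $P_{SUF}\le P_n\le P_{NEC}$, with $1+P_{NEC}=\frac{1-2y+\sqrt{1-4y}}{2y}$, $y=\frac14(1-\delta)^{4/n}$, and $1+P_{SUF}=\frac{1-2y_0+\sqrt{1-4y_0}}{2y_0}$, $y_0=\frac14(1-\delta^2)^{2/n}$ (normalizing $\sigma^2=1$). Both the Gaussian capacity $C(P)=\frac12\log(1+P)$ and the dispersion $V(P)=\frac12\bigl(1-(1+P)^{-2}\bigr)\log^2 e$ are strictly increasing in $P$, so the argument reduces to substituting the correct endpoint of this sandwich into each term of the achievability bound (\ref{achibound1}) and the converse bound (\ref{Conv2}), while keeping track of the direction of every inequality.

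For the converse I would apply Theorem~\ref{Conv4} with $P(n)=P_n$: monotonicity of $C$ gives $nC(P_n)\le nC(P_{NEC})$, and evaluating $C(P_{NEC})$ through (\ref{Pnec}) produces the first term of (\ref{throughputupper2}). Assuming $\epsilon<\frac12$ so that $Q^{-1}(\epsilon)\ge0$ (the case $\epsilon\ge\frac12$ is symmetric), monotonicity of $V$ together with $P_{SUF}\le P_n$ gives $-\sqrt{nV(P_n)}\,Q^{-1}(\epsilon)\le-\sqrt{nV(P_{SUF})}\,Q^{-1}(\epsilon)$, which yields the radical in (\ref{throughputupper2}); the residual $\frac12\log n+O(1)$ is $O(\log n)$. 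For the achievability I would use (\ref{achibound1}) with $P(n)=P_n\ge P_{SUF}$: here $nC_\mu(P_n)=\frac n2\log(1+\mu P_n)\ge\frac n2\log(1+\mu P_{SUF})$, matching the first term of (\ref{throughputlower2}) up to the constant $\mu\in(0,1)$, which may be taken as close to $1$ as the admissible truncation range allows and is anyway irrelevant for the order, and the dispersion term is bounded with the inequality reversed using $P_n\le P_{NEC}$, inserting $1+P_{NEC}$ into $V$. It then remains to note that $\log\tau_0=O(1)$ and $\log P_X[\bar{\mathsf F}_n]=O(1)$ — the latter because the candidate norms satisfy $\|\bm x\|^2/(n\mu P(n))\to1$ almost surely, hence $\|\bm x\|^2/(nP(n))\to\mu\in(\mu^2,1)$ and $P_X[\bar{\mathsf F}_n]\to1$ by concentration — so, together with $\frac12\log n$, these terms contribute only $O(\log n)$.

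For the first- and second-order behavior I would expand $(1-\delta)^{4/n}=1+\frac{4}{n}\ln(1-\delta)+O(n^{-2})$, so $y=\frac14-\frac an+O(n^{-2})$ with $a=-\ln(1-\delta)>0$, and likewise $y_0=\frac14-\frac{a_0}{n}+O(n^{-2})$ with $a_0=-\frac12\ln(1-\delta^2)>0$. Then $\sqrt{1-4y}=2\sqrt{a/n}+O(1/n)$, and substituting into $1+P_{NEC}=\frac{1-2y+\sqrt{1-4y}}{2y}$ gives $P_{NEC}=4\sqrt a\,n^{-1/2}+O(n^{-1})$, and analogously $P_{SUF}=4\sqrt{a_0}\,n^{-1/2}+O(n^{-1})$; this is precisely the square-root-law scaling. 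Hence $nC(P_{NEC})=\frac{2\sqrt a}{\ln2}\,n^{1/2}+O(1)=\Theta(n^{1/2})$, so the first term of the throughput bound has order $n^{1/2}$, whereas $V(P_{SUF})=4\sqrt{a_0}\log^2 e\cdot n^{-1/2}+O(n^{-1})$ gives $\sqrt{nV(P_{SUF})}=2\log e\cdot a_0^{1/4}\,n^{1/4}\bigl(1+o(1)\bigr)=\Theta(n^{1/4})$, so the second term has order $n^{1/4}$; the same computation with $P_{SUF}$ and $P_{NEC}$ interchanged handles the companion bound, and since $n^{1/4}$ dominates $\log n$ the stated orders follow.

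The expansions above are routine, so the main obstacle I anticipate is the bookkeeping of inequality directions when the two distinct endpoints $P_{SUF}$ and $P_{NEC}$ are fed into $C$ versus into $V$ — this hinges on $V$ being increasing in $P$ and on the sign of $Q^{-1}(\epsilon)$ — and making sure the $O(\log n)$ absorption is uniform. The truncation gap $V_T(\bar{\mathbb P}_1,\mathbb P_1)$ and the constant $\mu$ are handled exactly as in Section~III-B under the moderate-blocklength assumption, so they perturb only the constants and never the exponents.
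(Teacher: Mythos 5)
Your proposal is correct and follows essentially the same route as the paper: substitute the covert-constraint power endpoints $P_{SUF}$ and $P_{NEC}$ into the achievability bound (\ref{achibound1}) and the converse bound (\ref{Conv2}), then expand $y$ and $y_0$ near $\tfrac14$ to see that the admissible powers scale as $\Theta(n^{-1/2})$, which yields the $O(n^{1/2})$ capacity term and the $O(n^{1/4})$ dispersion term. Your explicit bookkeeping of which endpoint enters the capacity term versus the dispersion term (via monotonicity of $C$ and $V$ and the sign of $Q^{-1}(\epsilon)$) just makes precise the mixed $y$/$y_0$ structure that the paper's proof applies implicitly.
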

 \begin{proof}
 The necessary and sufficient condition on the maximal throughput are applications of Theorem \ref{Conv4} and Theorem \ref{AC1} on the power $P_{NEC}$ and $P_{SUF}$. A sufficient condition on power level will both satisfy the covert constraint and obtain the achievability bound $\log M_{SUF}^*(n, \epsilon)$. A necessary condition on power level will lead to the converse bound $\log M_{NEC}^*(n, \epsilon)$. They provide achievability and converse bounds on the maximal throughput with given TVD constraint $V_T(\mathbb{P}_1\|\mathbb{P}_0))\leq \delta$. In the following, we will analyze the order of the first and the second terms of the quantities $\log M_{NEC}^*(n, \epsilon)$ and $\log M_{SUF}^*(n, \epsilon)$. The following quantities are significant for our analysis.
\begin{itemize}
 \item The quantity $\frac{1-2y + \sqrt{1-4y}}{2y}$.

 Denote $\lambda = \sqrt{1 - 4y}$, then $\lambda = \left[1 -(1 - \delta)^{\frac{4}{n}}\right]^{\frac{1}{2}}$. If we further denote $t = - \ln (1 -\delta)^{\frac{4}{n}}$, we have $(1-\delta)^{\frac{4}{n}} = e^{-t}$ and $-t = O(\frac{1}{n})$ as $n \rightarrow \infty$.
 Now
 \begin{equation}
 \begin{split}
 \lambda = &(1 - e^{-t})^{\frac{1}{2}}\\
 = & (1 - 1 + t-\frac{t^2}{2} + \frac{t^3}{6}+ \cdots )^{\frac{1}{2}}\\
 = & O(\frac{1}{\sqrt{n}}).
 \end{split}
 \end{equation}
 Since
 \begin{equation}
 \frac{1-2y + \sqrt{1-4y}}{2y}
 = 1 + \frac{2\lambda(1 + \lambda)}{1- \lambda^2},
 \end{equation}
 we have
 \begin{equation}\label{aaa}
 \begin{split}
 \log \frac{1-2y + \sqrt{1-4y}}{2y}
 = \frac{\ln \left[1 + \frac{2\lambda(1 + \lambda)}{1- \lambda^2}\right]}{\ln 2}
 = O(\frac{1}{\sqrt{n}}).
 \end{split}
 \end{equation}
\begin{equation}\label{firstterm1}
\begin{split}
 &1 - \frac{1}{\left[\frac{1-2y_0 + \sqrt{1-4y_0}}{2y_0}\right]^2}
 =1 - \left[ \frac{1-\lambda^2}{1 + 2\lambda + \lambda^2}\right]^2
 = O(\frac{1}{\sqrt{n}}).
 \end{split}
 \end{equation}
\item The quantity $\frac{1-2y_0 + \sqrt{1-4y_0}}{2y_0}$.

Denote $\lambda_1 = \sqrt{1 - 4y_0} = \sqrt{1 - (1 - \delta^2)^{\frac{2}{n}}}$ and $s = - \frac{2}{n}\ln (1 - \delta^2)$, we have $ s = O(\frac{1}{n})$ and $(1 -\delta^2)^{\frac{2}{n}} = e^{-s}$. Moreover,
\begin{equation}
\lambda_1 = O(\frac{1}{\sqrt{n}}),
\end{equation}
\begin{equation}
\frac{1-2y_0 + \sqrt{1-4y_0}}{2y_0} = 1 + \frac{2\lambda_1(1 + \lambda_1)}{1-\lambda_1^2}.
\end{equation}
Thus,
\begin{equation}\label{Prop3bound1}
\log \frac{1-2y_0 + \sqrt{1-4y_0}}{2y_0} = O(\lambda_1) = O(\frac{1}{\sqrt{n}}),
\end{equation}
and
 \begin{equation}\label{Prop3bound2}
 \begin{split}
 &1 - \frac{1}{\left[\frac{1-2y_0 + \sqrt{1-4y_0}}{2y}\right]^2}
 =  1 - \left[ \frac{1-\lambda_1^2}{1 + 2\lambda_1 + \lambda_1^2}\right]^2
 = O(\frac{1}{\sqrt{n}}).
 \end{split}
 \end{equation}
 \end{itemize}

 Now we analyze the first and the second term of $\log M_{NEC}^*(n, \epsilon)$ and $\log M_{SUF}^*(n, \epsilon)$.
 \begin{itemize}
 \item $\log M_{NEC}^*(n,\epsilon,\delta)$ in (\ref{throughputupper2}). From the above bounds, especially (\ref{Prop3bound1}) and (\ref{Prop3bound2}), the first term is of order $O(\sqrt{n})$, and the second term is of order $O(n^{\frac{1}{4}})$.

 \item $\log M_{SUF}^*(n,\epsilon,\delta)$ in (\ref{throughputlower2}). The first term is of order $O(\sqrt{n})$, and the second term is of order $O(n^{\frac{1}{4}})$.
 \end{itemize}
 The first-order asymptotics of the maximal throughput in $\log M_{NEC}^*(n, \epsilon,\delta)$ and $\log M_{SUF}^*(n, \epsilon,\delta)$ are both $O(n^{\frac{1}{2}})$. Hence, the first-order asymptotic of $\log M^*(n, \epsilon,\delta)$ must be $O(n^{\frac{1}{2}})$.
 The second-order asymptotics of the maximal throughput in $\log M_{NEC}^*(n, \epsilon,\delta)$ and $\log M_{SUF}^*(n, \epsilon)$ are both $O(n^{\frac{1}{4}})$. Hence, the second-order asymptotic of $\log M^*(n, \epsilon, \delta)$ must be $O(n^{\frac{1}{4}})$.
 \end{proof}

\section{Analysis of Total Variation Distance in Covert Communication over AWGN Channels}
In this section, we extend the analysis of throughput to the TVD at the adversary under the same assumption that the effect of selection is negligible. In other words, we assume that $\mu$ is properly chosen and the blocklength $n$ is at least moderately large ($n \geq 500$) so that we can regard that each coordinate of the codewords is subject to normal distribution $\mathcal{N}(0, \mu P(n))$. In this case, TVD at the adversary can be approximated by $V_T(\mathbb{P}_1, \mathbb{P}_0)$.

\subsection{Analytic Formula of $V_T(\mathbb{P}_1, \mathbb{P}_0)$}
Although we have gotten upper and lower bounds of the maximal throughput, the power we use is based on divergence inequalities, which will impair the accuracy of the power and hence the throughput when the interest is on the behavior with finite $n$. In this section, we will get analytic formula of $V_T(\mathbb{P}_0,\mathbb{P}_1)$ with $\bm{snr} = \theta_n$. The formula will permit us to get accurate evaluation of TVD with given power level.
\begin{Theorem}\label{analytic}
With fixed block length $n$ and Gaussian signal with power $p_n$, the total variation distance at Willie is formulated as
\begin{equation}\label{app}
V_T(\mathbb{P}_1,\mathbb{P}_0) = \frac{1}{\Gamma(n/2)}\left[\gamma(\frac{n}{2},f(\theta_n))-\gamma(\frac{n}{2},g(\theta_n))\right].
\end{equation}
In the above formula, $n$ is the blocklength, $\theta_n = \frac{p_n}{\sigma^2}$ is the $\bm{snr}$, $\Gamma(x)$ is the well known Gamma function and $\gamma(a,x)$ is the incomplete gamma function. Moreover, $f(\theta_n)= \frac{1}{2}n\left(1+ \frac{1}{\theta_n}\right)\ln(1+ \theta_n)$ and $g(\theta_n)= \frac{1}{2}n\frac{\ln(1+ \theta_n)}{\theta_n}$.
\end{Theorem}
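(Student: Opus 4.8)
The plan is to reduce the $n$-dimensional total variation distance to a one-dimensional tail computation by exploiting the spherical symmetry of $\mathbb{P}_0$ and $\mathbb{P}_1$. Since $V_T$ is invariant under the common rescaling $\bm{z} \mapsto \bm{z}/\sigma$, I may take $\sigma^2 = 1$, so that $\mathbb{P}_0 = \mathcal{N}(\bm{0},\bm{I}_n)$ and $\mathbb{P}_1 = \mathcal{N}(\bm{0},\sigma_1^2\bm{I}_n)$ with $\sigma_1^2 = 1 + \theta_n$. First I would recall the Neyman--Pearson/Scheff\'e characterization $V_T(\mathbb{P}_1,\mathbb{P}_0) = \mathbb{P}_1(A^\ast) - \mathbb{P}_0(A^\ast)$, where $A^\ast = \{\bm{z} : \mathrm{d}\mathbb{P}_1/\mathrm{d}\mathbb{P}_0(\bm{z}) > 1\}$. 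The likelihood ratio here equals $\sigma_1^{-n}\exp\!\big(\tfrac{\|\bm{z}\|^2}{2}(1 - \sigma_1^{-2})\big)$, which, because $\sigma_1^2 > 1$, is strictly increasing in $\|\bm{z}\|^2$; hence $A^\ast = \{\bm{z} : \|\bm{z}\|^2 > r^2\}$ for the unique crossover radius $r^2$ at which the two densities coincide.

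Second, solving the scalar equation $f_1(\bm{z}) = f_0(\bm{z})$ for $\|\bm{z}\|^2$ gives $r^2 = \frac{n\sigma_1^2\ln\sigma_1^2}{\sigma_1^2 - 1} = \frac{n(1+\theta_n)\ln(1+\theta_n)}{\theta_n}$, whence $r^2/2 = f(\theta_n)$ and $r^2/(2\sigma_1^2) = g(\theta_n)$. This algebraic identification is exactly what produces the closed forms of $f$ and $g$ in the statement, and it is routine.

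Third, I would invoke the fact that $\|\bm{z}\|^2$ is $\chi^2_n$-distributed under $\mathbb{P}_0$ and $\sigma_1^2\chi^2_n$-distributed under $\mathbb{P}_1$, together with the classical identity $\Pr[\chi^2_n \le x] = \gamma(n/2, x/2)/\Gamma(n/2)$. Evaluating the tails of $A^\ast$ then gives $\mathbb{P}_0(A^\ast) = 1 - \gamma(n/2, r^2/2)/\Gamma(n/2)$ and $\mathbb{P}_1(A^\ast) = 1 - \gamma(n/2, r^2/(2\sigma_1^2))/\Gamma(n/2)$; subtracting and substituting the values of $r^2/2$ and $r^2/(2\sigma_1^2)$ yields precisely $V_T(\mathbb{P}_1,\mathbb{P}_0) = \big[\gamma(n/2,f(\theta_n)) - \gamma(n/2,g(\theta_n))\big]/\Gamma(n/2)$. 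Nonnegativity is automatic because $f(\theta_n) = g(\theta_n) + \tfrac12 n\ln(1+\theta_n) \ge g(\theta_n)$ and $\gamma(n/2,\cdot)$ is increasing, which also reconfirms that $A^\ast$ (rather than its complement) is the maximizing set.

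The genuinely delicate points are (i) justifying cleanly that the level set $A^\ast$ of the likelihood ratio attains the supremum defining $V_T$ --- standard, but it needs the monotonicity of the ratio in $\|\bm{z}\|^2$ established above --- and (ii) keeping the factors of two straight between the $\chi^2_n$/gamma relation and the incomplete-gamma convention, so that the arguments come out exactly as $f(\theta_n)$ and $g(\theta_n)$ and not, say, $2f(\theta_n)$. The rest is elementary.
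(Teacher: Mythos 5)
Your proposal is correct and follows essentially the same route as the paper: both identify the crossover radius at which the two Gaussian densities coincide, reduce $V_T(\mathbb{P}_1,\mathbb{P}_0)$ to the difference of two radial (chi-squared-type) probabilities of the corresponding ball, and express these via incomplete gamma functions with arguments $f(\theta_n)$ and $g(\theta_n)$. The only difference is cosmetic: the paper re-derives the radial law by explicit spherical-coordinate integration, whereas you invoke the standard identity $\Pr[\chi^2_n \le x]=\gamma(n/2,x/2)/\Gamma(n/2)$ directly (together with the Neyman--Pearson/Scheff\'e form of the optimal set), which shortens but does not change the argument.
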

The proof can be found in Appendix \ref{Append2}, and it can also be obtained by geometric integration methods from \cite{Harold}.
\begin{Remark}
The incomplete gamma functions
\begin{eqnarray}
\gamma(a, z)= \int_0^{z}e^{-t}t^{a-1}dt,\\
\Gamma(a, z)= \int_z^{\infty}e^{-t}t^{a-1}dt
\end{eqnarray}
are related as follows:
$$ \gamma(a, z) + \Gamma(a, z) = \Gamma(a);$$
\end{Remark}
Theorem \ref{analytic} provides an accurate quantitative measure of the discrimination respect to the noise level at the adversary, whose input variables are the block length $n$ and $\bm{snr}$. It will help us understand the discrimination of two multivariate normal distribution with the same mean vector and different covariance matrices.
There are several interesting facts about the total variation distance at the adversary from the conclusion of Theorem \ref{analytic}.
\begin{enumerate}
\item [(a)] The numerator is the difference of two incomplete gamma functions, the first variables of which are the same, i.e., half of the blocklength.
\item [(b)] The second variables lie on the left and right of $n/2$, and the difference of which is $\frac{1}{2}n \ln(1 + \theta_n)$, i.e., the capacity multiplied by the blocklength $n$.
\end{enumerate}

\subsection{Numerical Approximation for $V_T(\mathbb{P}_1, \mathbb{P}_0)$}

Since the analytic formula for the total variation distance at the adversary is involved with Gamma function and incomplete gamma functions, it is not convenient to evaluate them in general. Therefore, it is necessary to give relatively simple formulae to evaluate these gamma functions. For Gamma function, we have String formula \cite{Simon} as asymptotic approximation,
 \begin{equation}\label{Gamma0}
 \lim_{n\rightarrow \infty } \frac{n!}{e^{-n}n^n\sqrt{2\pi n}} = 1.
 \end{equation}
For the incomplete gamma functions $\gamma(a,z)$ and $\Gamma(a, z)$, we have the following expansions for approximate evaluation:
\begin{enumerate}
\item In the case of $\mathfrak{R}(a) > -1$ and $\mathfrak{R}(a) > \mathfrak{R}(z)$, if $z$ is away from the transition point $a$ (\cite{Ferreira}, Section 3),
\begin{eqnarray}\label{gammaexpan1}
\gamma(a+1,z) = e^{-z}z^{a+1}\sum_{k=0}^{\infty}c_k(a)\Phi_k(z-a),
\end{eqnarray}
where $c_k(a)$ is expressed as
\begin{equation}
c_k(a) = \sum_{j=0}^k \frac{(-a)_j}{j!}\frac{a^{k-j}}{(k-j)!}
\end{equation}
with $(-a)_j = (-a)\cdot(-a+1)\cdots (-a+j-1)$
and has recurrence
\begin{equation}
c_{k+1}(a) =\frac{1}{k+1}[kc_k(a)-ac_{k-1}(a)].
\end{equation}
In addition,
\begin{equation}\label{ck}
c_k(a) = O(a^{\lfloor\frac{k}{2}\rfloor}),\,\ \ \, |a|\rightarrow \infty.
\end{equation}
The function $\Phi_k(z-a)$ has recurrence
\begin{equation}
\Phi_k(z-a)= \frac{1}{z-a}\left[ e^{z-a} - k\Phi_{k-1}(z-a)\right]
\end{equation}
and satisfies the following equation
\begin{equation*}
\Phi_k(z-a) = \frac{k!}{(a-z)^{k+1}} -e^{z-a}\sum_{j=0}^k\frac{k!}{(k-j)!(a-z)^{j+1}}
\end{equation*}
with $e^{z-a}$ exponentially small for $\mathfrak{R}(a) > \mathfrak{R}(z)$. We also have
$$\Phi_k(z-a) = O((z-a)^{-k-1}), \,\ \ \, |z-a|\rightarrow \infty.$$

The expansion in (\ref{gammaexpan1}) is convergent, and also asymptotic for large $a-z = O(a^{1/2+\epsilon}),\, \ \, \epsilon > 0$.
\item In the case of $\mathfrak{R}(a) > -1$ and $\mathfrak{R}(a) < \mathfrak{R}(z)$, if $z$ is away from the transition point $a$ (\cite{Ferreira}, Section 4),
\begin{equation}\label{Gammaexpan2}
\Gamma(a+1,z) \sim e^{-z}z^{a+1}\sum_{k=0}^{\infty}\frac{c_k^{*}(a)}{(z-a)^{k+1}},
\end{equation}
where $c_k(a)$ is expressed as
\begin{equation}
c_k^{*}(a) = (-1)^k\sum_{j=0}^{k}k! \frac{(-a)_j}{j!}\frac{a^{k-j}}{(k-j)!}
\end{equation}
and has recurrence
\begin{equation}
c_{k+1}^{*}(a) = -k \left[c_k^{*}(a) - ac_{k-1}^{*}(a)\right].
\end{equation}
The expansion in (\ref{Gammaexpan2}) is not convergent, nevertheless, it is asymptotic for large $a-z = O(a^{1/2+\epsilon})$ with $\epsilon > 0$.

From the expressions of $c_k$ and $c_k^{*}$, we have
\begin{equation}
c_{k}^{*}(a) = (-1)^k k! c_k(a)
\end{equation}
for case (1) and case (2).
\item For large $a$ and $z$ such that $a - z = o(a^{2/3})$, if $\|Arg(z)\|< \pi$, there is asymptotic expansion
\begin{equation}\label{Gammaexpan3}
\Gamma(a+1,z) \sim e^{-a}a^{a+1}\sum_{k=0}^{\infty}c_k(a)\Phi_k(a,z)
\end{equation}
with $$c_0(a) = 1, \,\ \ \,c_1(a) = c_2(a) =0,$$
$$\Phi_0(a,z) = \sqrt{\frac{\pi}{2a}}erfc(\frac{z-a}{\sqrt{2a}}), \, \ \ \,\Phi_1(a,z) = \frac{e^{-(z-a)^2/(2a)}}{a}$$ and for $k\geq 2$,\footnote{ $erfc$ is the complementary error function, which is defined as $ erfc(x) = 1 - erf(x) = \frac{2}{\sqrt{\pi}}\int_{x}^{\infty}e^{-t^2}dt$}
\begin{equation}
c_{k+1}(a) = \frac{1}{k+1} \left[a\cdot c_{k-2}(a) - k\cdot c_k(a)\right],
\end{equation}
\begin{equation}\label{Phi3}
\begin{split}
\Phi_{k}(a,z) = & \frac{1}{a} [(k-1)\Phi_{k-2}(a,z) \\
+& \left(\frac{z-a}{a}\right)^{k-1}\cdot e^{-\frac{(z-a)^2}{2a}}].
\end{split}
\end{equation}
\end{enumerate}

\begin{Remark}\cite{Simon}
We say that, a power series expansion $\sum_{n=0}^{\infty}a_n(z-z_0)^n$ is {\bf convergent} for $|z - z_0|< r$ with some $r\geq 0$,
provided $$R_n(x) = \sum_{n= N+1}^{\infty} a_n(z-z_0)^n \rightarrow 0,$$
as $N\rightarrow \infty$ for each fixed $z$ satisfying $|z-z_0|<r$.
We say that, a function $f(z)$ has an {\bf asymptotic} series expansion of $\sum_{n=0}^{\infty}a_n(z-z_0)^n$ as $z\rightarrow z_0$, i.e.
$$f(z) \sim \sum_0^{\infty} a_n(z-z_0)^n,$$ provided
$$R_n(x) = o((z-z_0)^N),$$ as $z\rightarrow z_0$ for each fixed $N$.
Note that, in practical terms, an asymptotic expansion can be of more value than a slowly converging expansion.
\end{Remark}

We have the following theorem by utilization of the above conclusions properly, and the details could be found in the Appendix \ref{APPend3}.

\begin{Theorem}\label{approximation}
TVD at the adversary could be approximated by
\begin{equation}\label{app1}
\frac{1}{n^{\frac{1}{4}}\sqrt{\pi}\cdot2^{\frac{5}{4}}}\sum_{k=0}^{\infty}c_k(a)\left[\Phi_k(a,g(\theta)) -  \Phi_k(a,f(\theta))\right]
\end{equation}
when $\tau \geq \frac{1}{2}$
and
\begin{equation}\label{app2}
\begin{split}
1 - &e^{-f(\theta_n)+ \frac{n}{2}}\left(\frac{f(\theta_n)}{\frac{n}{2}}\right)^{\frac{n}{2}}\frac{1}{\sqrt{\pi}n^{\frac{1}{4}}}\sum_{k=0}^{\infty}\frac{ (-1)^k k! c_k(\frac{n}{2}-1)}{(f(\theta_n)+1 -\frac{n}{2})^{k+1}} \\
 + \,\ \, &  e^{-g(\theta_n)+ \frac{n}{2}}\left(\frac{g(\theta_n)}{\frac{n}{2}}\right)^{\frac{n}{2}}\frac{1}{\sqrt{\pi}n^{\frac{1}{4}}}\sum_{k=0}^{\infty}\frac{ (-1)^{k+1} k! c_k(\frac{n}{2}-1)}{(g(\theta_n)+1 -\frac{n}{2})^{k+1}}\\
 + \,\ \,&e^{-g(\theta_n)+ \frac{n}{2}}\left(\frac{g(\theta_n)}{\frac{n}{2}}\right)^{\frac{n}{2}}\frac{1}{\sqrt{\pi}n^{\frac{1}{4}}}\sum_{k=0}^{\infty}c_k(\frac{n}{2}-1) e^{g(\theta_n)+1 -\frac{n}{2}}\\
 &\cdot\sum_{j=0}^k\frac{(-1)^j k!}{(k-j)!(g(\theta_n)+1 -\frac{n}{2})^{j+1}}
 \end{split}
\end{equation}
when $\tau < \frac{1}{2}$, respectively.

\end{Theorem}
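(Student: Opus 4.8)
The analysis starts from Theorem~\ref{analytic}, which reduces $V_T(\mathbb{P}_1,\mathbb{P}_0)$ to the quantity $\tfrac{1}{\Gamma(n/2)}\bigl[\gamma(\tfrac n2,f(\theta_n))-\gamma(\tfrac n2,g(\theta_n))\bigr]$, so the whole task is to replace the three transcendental objects $\Gamma(n/2)$, $\gamma(\tfrac n2,f(\theta_n))$ and $\gamma(\tfrac n2,g(\theta_n))$ by their asymptotic surrogates and to collect constants. First I would use the complementarity identity $\gamma(a,z)+\Gamma(a,z)=\Gamma(a)$ to rewrite the numerator in the form best suited to each case --- namely $\Gamma(\tfrac n2,g(\theta_n))-\Gamma(\tfrac n2,f(\theta_n))$ in one regime and $\Gamma(n/2)-\Gamma(\tfrac n2,f(\theta_n))-\gamma(\tfrac n2,g(\theta_n))$ in the other --- so that afterwards $V_T$ becomes a sum of terms of the shape $e^{-z}z^{n/2}/\Gamma(n/2)$ (or $e^{-a}a^{a+1}/\Gamma(n/2)$) times an explicit series. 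Stirling's approximation \eqref{Gamma0} then collapses each of these Gamma-prefactors to an elementary function of $n$ and $z$, which is the source of the closed-form coefficients appearing in \eqref{app1}--\eqref{app2}.

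The decisive step is to locate the arguments $f(\theta_n)$ and $g(\theta_n)$ relative to the transition point $a\approx n/2$ of the incomplete gamma function. Taylor-expanding $\ln(1+\theta_n)$ about $\theta_n=n^{-\tau}\to 0$ gives $f(\theta_n)=\tfrac n2+\tfrac{n\theta_n}{4}+O(n\theta_n^2)$ and $g(\theta_n)=\tfrac n2-\tfrac{n\theta_n}{4}+O(n\theta_n^2)$, so the two points straddle $n/2$ symmetrically at distance $\sim\tfrac14 n^{1-\tau}$, with $f-g=\tfrac n2\ln(1+\theta_n)\sim\tfrac12 n^{1-\tau}$ (block length times capacity). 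Measuring this displacement against the width $\sqrt{2a}\sim\sqrt n$ of the transition layer shows that $\tau=\tfrac12$ is exactly the watershed: for $\tau\ge\tfrac12$ the normalized displacements $\tfrac{f(\theta_n)-a}{\sqrt{2a}},\tfrac{g(\theta_n)-a}{\sqrt{2a}}=O(n^{1/2-\tau})$ stay bounded, so both arguments lie inside (or on the edge of) the transition region and one must use the uniform expansion \eqref{Gammaexpan3}; for $\tau<\tfrac12$ these normalized displacements diverge, with $f(\theta_n)$ moving far into the region $z>a$ and $g(\theta_n)$ far into $z<a$, so the two ``$z$ away from $a$'' expansions \eqref{Gammaexpan2} and \eqref{gammaexpan1} are the right tools. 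Concretely, when $\tau\ge\tfrac12$ I would substitute \eqref{Gammaexpan3} (with the shift $a=\tfrac n2-1$ forced by $\Gamma(a+1,\cdot)=\Gamma(\tfrac n2,\cdot)$) into $\Gamma(\tfrac n2,g(\theta_n))-\Gamma(\tfrac n2,f(\theta_n))$, divide by $\Gamma(n/2)$, simplify $e^{-a}a^{a+1}/\Gamma(n/2)$ via \eqref{Gamma0}, and collect constants to reach \eqref{app1}. When $\tau<\tfrac12$ I would expand $\Gamma(\tfrac n2,f(\theta_n))$ by \eqref{Gammaexpan2} (using $c_k^{*}(a)=(-1)^k k!\,c_k(a)$), expand $\gamma(\tfrac n2,g(\theta_n))$ by \eqref{gammaexpan1} together with the closed form of $\Phi_k\bigl(g(\theta_n)-a\bigr)$ valid for $g(\theta_n)<a$ (this is what produces the two $g$-dependent double sums), normalize each piece by \eqref{Gamma0}, and assemble $V_T = 1 - (\text{$f$-term}) - (\text{$g$-terms})$ to obtain \eqref{app2}.

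The main obstacle is not any single manipulation but the uniform control of the error terms across the whole range of $\tau$. One has to check that $f(\theta_n)-a\sim\tfrac14 n^{1-\tau}$ genuinely lies in the validity window $O(a^{1/2+\epsilon})$ of \eqref{Gammaexpan2}--\eqref{gammaexpan1} (which degrades as $\tau\to 0$, where $f(\theta_n)-a$ becomes comparable to $a$ itself and ``away from the transition point'' is only marginally satisfied) and in the window $o(a^{2/3})$ of \eqref{Gammaexpan3} when $\tau\ge\tfrac12$, and then that the higher-order coefficients, which grow like $c_k(a)=O(a^{\lfloor k/2\rfloor})$ by \eqref{ck}, are nevertheless dominated once multiplied by the decaying $\Phi_k$ generated by the recurrences \eqref{Phi3}; this term-by-term comparison is the real content of the argument. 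A secondary point is consistency at the boundary $\tau=\tfrac12$: the two approximations \eqref{app1} and \eqref{app2} must agree to leading order there, which serves as an internal check on the regime split and on the constant bookkeeping. Granting these verifications, the theorem follows by direct substitution of the surrogates.
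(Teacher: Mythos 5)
Your proposal follows essentially the same route as the paper's proof in Appendix~\ref{APPend3}: rewrite $\gamma(\tfrac n2,f)-\gamma(\tfrac n2,g)$ via the complementarity identity as $\Gamma(\tfrac n2,g)-\Gamma(\tfrac n2,f)$ for $\tau\ge\tfrac12$ and as $\Gamma(\tfrac n2)-\Gamma(\tfrac n2,f)-\gamma(\tfrac n2,g)$ for $\tau<\tfrac12$, split the regimes according to whether $f(\theta_n),g(\theta_n)$ lie inside or outside the transition region around $a=\tfrac n2-1$, apply \eqref{Gammaexpan3} in the first case and \eqref{Gammaexpan2}/\eqref{gammaexpan1} (with $c_k^{*}(a)=(-1)^kk!\,c_k(a)$ and the closed form of $\Phi_k$) in the second, and normalize by an asymptotic formula for $\Gamma(n/2)$. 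The only cosmetic difference is that the paper obtains its specific prefactor $n^{-1/4}\pi^{-1/2}2^{-5/4}$ by combining Stirling with Legendre's duplication formula rather than by plain Stirling, which does not change the structure of the argument.
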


 Though we can get some bounds and second order asymptotic on the maximal throughput of covert communication over AWGN channels by some bounds on TVD, they are usually rather rough in the finite blocklength regime. From the equations (\ref{app1}) and (\ref{app2}), the approximation of the total variation distance when $\tau \geq \frac{1}{2}$ and $\tau < \frac{1}{2}$ can be obtained. They are easy to evaluate, and numerical results show that they are good approximations for the total variation distance. From the evaluations of TVD with given values of the power level, we can approximate the proper power with given TVD constraint directly, which will lead to more accurate evaluation of the maximal throughput with different TVD constraint. Hence, Theorem \ref{approximation} provides us a tool for this approach and its importance will be more clear in Section V.

\subsection{Analysis of the Convergence Rate of $V_T(\mathbb{P}_0,\mathbb{P}_1)$ with respect to $n$}

Although the approximation numerical formulae for TVD are derived in the last section, we also wish to get its convergence rates when $n \rightarrow\infty$, which seems difficult to get from these expansions.
In the follow-on analysis, we will discuss the rates by the lower and upper bounds of $V_T(\mathbb{P}_0,\mathbb{P}_1)$ when $\tau > \frac{1}{2}$ and $\tau < \frac{1}{2}$, respectively.

The following lemma is from the definition of Hellinger distance (\ref{Hel}).
\begin{Lemma}
When $p_n \sim n ^{-\tau}\cdot \sigma^2 $ with $0 < \tau < \frac{1}{2}$, the square of the Hellinger distance $H^2(\mathbb{P}_0,\mathbb{P}_1)$ will approach to 1 when $n\rightarrow \infty$.
\end{Lemma}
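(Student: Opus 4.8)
The plan is to read off everything from the closed-form expression (\ref{VT}) for the squared Hellinger distance. Since $H^2$ is symmetric in its two arguments (the defining integral (\ref{Hel}) is symmetric), we have $H^2(\mathbb{P}_0,\mathbb{P}_1)=H^2(\mathbb{P}_1,\mathbb{P}_0)$, so it is enough to analyze the right-hand side of (\ref{VT}). Writing $\sigma_1^2=\sigma^2(1+\theta_n)$ with $\theta_n=p_n/\sigma^2=n^{-\tau}$, the quantity inside (\ref{VT}) is
\begin{equation*}
r_n \;:=\; \frac{2\sigma\sigma_1}{\sigma^2+\sigma_1^2} \;=\; \frac{2\sqrt{1+\theta_n}}{2+\theta_n},
\end{equation*}
so that $H^2(\mathbb{P}_0,\mathbb{P}_1)=1-r_n^{n/2}$. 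The whole lemma thus reduces to showing $r_n^{n/2}\to 0$ as $n\to\infty$.

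Next I would rewrite $r_n^2$ in a way that makes transparent how close it is to $1$. Using the identity $(2+\theta_n)^2-4(1+\theta_n)=\theta_n^2$,
\begin{equation*}
r_n^2 \;=\; \frac{4(1+\theta_n)}{(2+\theta_n)^2} \;=\; 1-\frac{\theta_n^2}{(2+\theta_n)^2},
\end{equation*}
which also verifies $0<r_n<1$ for every $n$. Taking logarithms and applying the elementary bound $\ln(1-x)\le -x$ for $x\in[0,1)$,
\begin{equation*}
\ln r_n^{n/2} \;=\; \frac{n}{4}\ln r_n^2 \;\le\; -\frac{n}{4}\cdot\frac{\theta_n^2}{(2+\theta_n)^2}.
\end{equation*}
Substituting $\theta_n=n^{-\tau}$ and using $(2+\theta_n)^2\le 9$ once $n$ is large enough that $\theta_n\le 1$ gives $\ln r_n^{n/2}\le -\tfrac{1}{36}\,n^{1-2\tau}$.

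The conclusion then follows from the hypothesis $\tau<\tfrac12$: the exponent $1-2\tau$ is strictly positive, so $n^{1-2\tau}\to\infty$, hence $\ln r_n^{n/2}\to-\infty$, i.e. $r_n^{n/2}\to 0$ and $H^2(\mathbb{P}_0,\mathbb{P}_1)=1-r_n^{n/2}\to 1$. There is essentially no obstacle here; the only point worth isolating is the scaling $1-r_n^2\sim \theta_n^2/4$ of the single-letter Hellinger affinity, which forces the $n$-letter affinity $r_n^{n/2}$ to decay like $e^{-\Theta(n\theta_n^2)}=e^{-\Theta(n^{1-2\tau})}$; this vanishes precisely in the regime $\tau<\tfrac12$, and combined with the lower bound $V_T\ge H^2$ in (\ref{Hellinq}) it is also the mechanism behind the later claim that $V_T(\mathbb{P}_0,\mathbb{P}_1)\to 1$ exponentially fast when $\tau<\tfrac12$.
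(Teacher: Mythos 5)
Your proof is correct and follows essentially the same route as the paper: start from the closed form $H^2(\mathbb{P}_0,\mathbb{P}_1)=1-\bigl(\tfrac{2\sigma\sigma_1}{\sigma^2+\sigma_1^2}\bigr)^{n/2}$, take the logarithm of the affinity term, and show it behaves like $-\Theta(n\theta_n^2)=-\Theta(n^{1-2\tau})\to-\infty$ when $\tau<\tfrac12$. Your use of the explicit bound $\ln(1-x)\le -x$ just makes rigorous the asymptotic equivalence the paper invokes, so there is nothing substantive to add.
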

\begin{proof}
For our case, the distributions $\mathbb{P}_0$ and $\mathbb{P}_1$ follow from multivariate normal distributions $N(0,\Sigma)$ and $N(0,\Sigma_1)$ with $\Sigma = \sigma^2 \cdot\mathbf{I}_n$ and $\Sigma_1 = (\sigma^2 + p_n)\cdot \mathbf{I}_n$, respectively.
  The square of the Hellinger distance of $\mathbb{P}_0$ and $\mathbb{P}_1$ is expressed as
   \begin{equation}\label{sqHell}
  H^2(\mathbb{P}_0,\mathbb{P}_1) = 1 - (\frac{2\sigma\sigma_1}{\sigma ^2 + \sigma_1^2})^{\frac{n}{2}}
   \end{equation}
   where $\sigma_1^2 = \sigma^2 + p_n$.
From the formula (\ref{sqHell}), we just need to prove that $(\frac{2\sigma\sigma_1}{\sigma ^2 + \sigma_1^2})^{\frac{n}{2}}$ approaches $0$ when the conditions are satisfied, denote $\theta = \frac{p_n}{\sigma^2} = c\cdot n^{-\tau }$ with $c$ as a constant, and the logarithm of $(\frac{2\sigma\sigma_1}{\sigma ^2 + \sigma_1^2})^{\frac{n}{2}}$  can then be formulated as follows,
\begin{equation}\label{sqHell2}
\begin{split}
&\frac{1}{2}n\ln\frac{2\sigma\sigma_1}{\sigma^2 + \sigma_1^2}\\
=&\frac{1}{2}n\ln\frac{2(1+ \theta)^{\frac{1}{2}}\sigma^2}{(2+\theta)\sigma^2}\\
=& \frac{1}{2}n\left[\ln 2(1+\theta)^{\frac{1}{2}}- \ln(2+\theta)\right]\\
=& \frac{1}{4}n\ln \frac{4+ 4cn^{-\tau }}{c^2n^{-2\tau } + 4cn^{-\tau } + 4}\\
\sim & - \frac{1}{4}n \cdot\frac{c^2n^{-2\tau }}{4cn^{-\tau } + 4}.
\end{split}
\end{equation}
When $\tau  < \frac{1}{2}$, $1 - 2\cdot\tau  > 0$ and the above logarithm will approach $-\infty$ as $n \rightarrow \infty$. Consequently, $(\frac{2\sigma\sigma_1}{\sigma ^2 + \sigma_1^2})^{\frac{n}{2}}$ approaches $0$ as $n \rightarrow \infty$ and the conclusion is obtained.
\end{proof}
\begin{Proposition}
The total variation distance between $\mathbb{P}_0$ and $\mathbb{P}_1$ will approach $1$ at the rate of  $$O(e^{-\frac{1}{4} n^{1-2\tau }})$$ when $0 < \tau  < \frac{1}{2}$ and $n \rightarrow \infty$
\end{Proposition}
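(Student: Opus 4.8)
The plan is to sandwich $1-V_T(\mathbb{P}_0,\mathbb{P}_1)$ between $0$ and the Hellinger deficit $1-H^2(\mathbb{P}_0,\mathbb{P}_1)$, and then read the decay rate off the closed form already used in the preceding Lemma. Since both the Hellinger distance and the total variation distance are symmetric in their arguments, the lower bound $H^2(\mathbb{P}_1,\mathbb{P}_0)\le V_T(\mathbb{P}_1,\mathbb{P}_0)$ from (\ref{Hellinq}) reads $H^2(\mathbb{P}_0,\mathbb{P}_1)\le V_T(\mathbb{P}_0,\mathbb{P}_1)\le 1$, so that
\begin{equation*}
0\ \le\ 1-V_T(\mathbb{P}_0,\mathbb{P}_1)\ \le\ 1-H^2(\mathbb{P}_0,\mathbb{P}_1)\ =\ \Bigl(\tfrac{2\sigma\sigma_1}{\sigma^2+\sigma_1^2}\Bigr)^{n/2},
\end{equation*}
where the last equality is (\ref{sqHell}) with $\sigma_1^2=\sigma^2(1+\theta_n)$ and $\theta_n=n^{-\tau}$ as in (\ref{as}). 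It therefore suffices to bound the right-hand side from above.

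First I would substitute $\sigma_1=\sigma\sqrt{1+\theta_n}$ and use $2+\theta_n=1+(1+\theta_n)$ to rewrite the base as $\tfrac{2\sqrt{1+\theta_n}}{2+\theta_n}$, and then exploit the elementary identity $1-\tfrac{2\sqrt{1+\theta_n}}{2+\theta_n}=\tfrac{(\sqrt{1+\theta_n}-1)^2}{2+\theta_n}=\tfrac{\theta_n^2}{(2+\theta_n)(1+\sqrt{1+\theta_n})^2}$. Applying $1-x\le e^{-x}$ then gives
\begin{equation*}
\Bigl(\tfrac{2\sigma\sigma_1}{\sigma^2+\sigma_1^2}\Bigr)^{n/2}\ \le\ \exp\!\Bigl(-\tfrac{n}{2}\cdot\tfrac{\theta_n^2}{(2+\theta_n)(1+\sqrt{1+\theta_n})^2}\Bigr).
\end{equation*}
Because $\theta_n=n^{-\tau}\to 0$, the denominator $(2+\theta_n)(1+\sqrt{1+\theta_n})^2\to 8$, so for every $\varepsilon>0$ there is an $n_0$ such that, for all $n\ge n_0$, the exponent is at most $-\tfrac{n\theta_n^2}{2(8+\varepsilon)}=-\tfrac{1}{2(8+\varepsilon)}\,n^{1-2\tau}$, which tends to $-\infty$ precisely because $1-2\tau>0$. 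This is exactly the Taylor computation already carried out inside the proof of the Lemma (see (\ref{sqHell2})), only kept as a one-sided inequality instead of an asymptotic equivalence. Hence $1-V_T(\mathbb{P}_0,\mathbb{P}_1)=O\!\bigl(e^{-c\,n^{1-2\tau}}\bigr)$ for a positive constant $c$ inherited from (\ref{sqHell2}), which is the asserted exponential convergence of $V_T(\mathbb{P}_0,\mathbb{P}_1)$ to $1$; the leading constant can be sharpened by feeding $H^2$ into Sason's inequality (\ref{improved}) rather than using the plain Hellinger bound.

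The hard part, such as it is, is not analytic: it is just (i) invoking the symmetry of $H$ and $V_T$ so that the Lemma's formula for $H^2(\mathbb{P}_0,\mathbb{P}_1)$ lines up with the bound $H^2(\mathbb{P}_1,\mathbb{P}_0)\le V_T(\mathbb{P}_1,\mathbb{P}_0)$ in (\ref{Hellinq}), and (ii) upgrading the ``$\sim$'' appearing in the Lemma's proof to a genuine upper bound valid for all large $n$, which the estimate above does via $1-x\le e^{-x}$ together with the convergence of $(2+\theta_n)(1+\sqrt{1+\theta_n})^2$ to $8$. Matching the particular constant $\tfrac14$ in the stated rate is only a matter of how coarsely that denominator is bounded (or of allowing a multiplicative constant in $\theta_n=c\,n^{-\tau}$); the qualitative conclusion $V_T(\mathbb{P}_0,\mathbb{P}_1)\to 1$ at a rate exponential in $n^{1-2\tau}$ is robust to these choices.
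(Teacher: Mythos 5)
Your proof is correct and takes essentially the same route as the paper's: both rest on the closed form (\ref{sqHell}) for the squared Hellinger distance, the sandwich $H^2(\mathbb{P}_0,\mathbb{P}_1)\le V_T(\mathbb{P}_0,\mathbb{P}_1)\le 1$, and the exponential decay of $\left(\frac{2\sigma\sigma_1}{\sigma^2+\sigma_1^2}\right)^{n/2}$ computed in (\ref{sqHell2}); your use of the identity $1-\frac{2\sqrt{1+\theta_n}}{2+\theta_n}=\frac{\theta_n^2}{(2+\theta_n)(1+\sqrt{1+\theta_n})^2}$ together with $1-x\le e^{-x}$ simply upgrades the paper's asymptotic equivalence to a genuine one-sided bound valid for all large $n$. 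Your caveat about the constant is also on point: the expansion in (\ref{sqHell2}) actually yields exponent $-\frac{c^2}{16}n^{1-2\tau}$ for $\theta_n=c\,n^{-\tau}$, so the literal $\frac14$ in the statement corresponds to a particular normalization rather than to $c=1$, exactly as you observe.
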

\begin{proof}
If we denote $(\frac{2\sigma\sigma_1}{\sigma ^2 + \sigma_1^2})^{\frac{n}{2}}$ as $t$ in (\ref{sqHell}), then $H^2(\mathbb{P}_0,\mathbb{P}_1) = 1-t$, and $\ln t = - \frac{1}{4}n \cdot\frac{c^2n^{-2\tau }}{4cn^{-\tau } + 4} \sim -\frac{1}{4}n^{1-2\tau }$.
Thus, we have
\begin{equation*}
t \sim e^{-\frac{1}{4}n^{1-2\tau }}
\end{equation*}
 When $\tau  < \frac{1}{2}$, we have $-\frac{1}{4}n^{1-2\tau }\rightarrow -\infty$ and $ e^{-\frac{1}{4}n^{1-2\tau }} \rightarrow 0 $ as $n\rightarrow \infty$, hence the rate that  $H^2(\mathbb{P}_0,\mathbb{P}_1)$ approaches $1$ is $e^{-\frac{1}{4}n^{1-2\tau }}$. Furthermore,
\begin{equation*}
H(\mathbb{P}_0,\mathbb{P}_1) = \sqrt{1-t} = 1 -\frac{1}{2}t + o(t) \, \ \ \, as \, \ \ \,t \rightarrow 0.
\end{equation*}
Therefore, $H(\mathbb{P}_0,\mathbb{P}_1)$ approaches $1$ at the same rate.
Consequently, from (\ref{Hellinq}), the rate that $V_T(\mathbb{P}_0,\mathbb{P}_1)$ approaches $1$ when $\tau  < \frac{1}{2}$ is $c\cdot e^{-\frac{1}{4} n^{1-2\tau }}$, where $c$ is a constant.
\end{proof}

Next, we consider the situation where $\tau  > \frac{1}{2}$.
\begin{Proposition}\label{Prop}
The total variation distance between $\mathbb{P}_0$ and $\mathbb{P}_1$ will approach $0$ at the rate between $O(n^{1-2\tau })$ and $O(n^{\frac{1}{2}(1-2\tau )})$ if $p_n \sim n ^{-\tau }\cdot \sigma^2 $ with $\tau  > \frac{1}{2}$ and $n \rightarrow \infty$.
\end{Proposition}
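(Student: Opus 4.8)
The plan is to sandwich $V_T(\mathbb{P}_0,\mathbb{P}_1)$ between the two Hellinger bounds of (\ref{Hellinq}), namely $H^2(\mathbb{P}_0,\mathbb{P}_1)\le V_T(\mathbb{P}_0,\mathbb{P}_1)\le\sqrt{2}\,H(\mathbb{P}_0,\mathbb{P}_1)$, and to control both endpoints through the closed form (\ref{sqHell}). Writing $t=\bigl(\tfrac{2\sigma\sigma_1}{\sigma^2+\sigma_1^2}\bigr)^{n/2}$ so that $H^2(\mathbb{P}_0,\mathbb{P}_1)=1-t$, the Taylor expansion already carried out in (\ref{sqHell2}) and reused in the proof of the previous Proposition gives $\ln t\sim-\tfrac14 n^{1-2\tau}$. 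The only structural difference from the case $\tau<\tfrac12$ is the sign of the exponent: here $1-2\tau<0$, so $n^{1-2\tau}\to0$ and hence $\ln t\to0^-$ rather than $-\infty$.

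Next I would extract the exact order of the Hellinger quantities. Since $\ln t\to0$, the elementary expansion $t=e^{\ln t}=1+\ln t+O\bigl((\ln t)^2\bigr)$ gives
\begin{equation*}
H^2(\mathbb{P}_0,\mathbb{P}_1)=1-t=-\ln t+O\bigl((\ln t)^2\bigr)\sim\tfrac14\,n^{1-2\tau},
\end{equation*}
so $H^2(\mathbb{P}_0,\mathbb{P}_1)=\Theta(n^{1-2\tau})\to0$ and therefore $H(\mathbb{P}_0,\mathbb{P}_1)=\Theta\bigl(n^{\frac12(1-2\tau)}\bigr)\to0$.

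The conclusion then follows directly from (\ref{Hellinq}). The left inequality gives $V_T(\mathbb{P}_0,\mathbb{P}_1)\ge H^2(\mathbb{P}_0,\mathbb{P}_1)\sim\tfrac14 n^{1-2\tau}$, so $V_T$ can approach $0$ no faster than $O(n^{1-2\tau})$; the right inequality gives $V_T(\mathbb{P}_0,\mathbb{P}_1)\le\sqrt{2}\,H(\mathbb{P}_0,\mathbb{P}_1)=O\bigl(n^{\frac12(1-2\tau)}\bigr)$, so $V_T$ approaches $0$ at least as fast as $O\bigl(n^{\frac12(1-2\tau)}\bigr)$. Since $1-2\tau<\tfrac12(1-2\tau)<0$ the two envelopes are mutually consistent, and the claimed rate ``between $O(n^{1-2\tau})$ and $O(n^{\frac12(1-2\tau)})$'' is precisely this sandwich. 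One could equally use the sharper bound (\ref{Helling2}); because $H^2\to0$, $\sqrt{1-(1-H^2)^2}=\sqrt{2H^2-H^4}\sim\sqrt{2}\,H$, so it produces the same leading-order upper envelope.

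The step I expect to be the real obstacle is not any single estimate, since each is routine, but the fact that the Hellinger sandwich is inherently two-sided loose in this regime: it cannot locate the true exponent of $V_T(\mathbb{P}_0,\mathbb{P}_1)$ inside $\bigl[\tfrac12(1-2\tau),\,1-2\tau\bigr]$. Pinning that down would require returning to the incomplete-gamma representation of Theorem \ref{analytic} and analysing the behaviour of $\gamma(n/2,\cdot)$ near its transition point $n/2$, which is exactly why the proposition is phrased as a range rather than a sharp asymptotic.
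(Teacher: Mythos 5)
Your proposal is correct and follows essentially the same route as the paper's own proof: both use the closed-form Hellinger expression (\ref{sqHell}), expand $t=e^{-\frac{1}{4}n^{1-2\tau}(1+o(1))}$ for $\tau>\frac{1}{2}$ to get $H^2\sim\frac{1}{4}n^{1-2\tau}$ and $H\sim\frac{1}{2}n^{\frac{1}{2}(1-2\tau)}$, and then sandwich $V_T$ via (\ref{Hellinq}). Your closing observation about the sharper bound (\ref{Helling2}) giving the same leading order matches the paper's subsequent remarks as well.
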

\begin{proof}
From (\ref{sqHell2}), the logarithm of $(\frac{2\sigma\sigma_1}{\sigma ^2 + \sigma_1^2})^{\frac{n}{2}}$ will approach $0$ from the left as $n\rightarrow \infty$, hence $(\frac{2\sigma\sigma_1}{\sigma ^2 + \sigma_1^2})^{\frac{n}{2}}$ will approach $1$ from the left. Therefore, $H^2(\mathbb{P}_0,\mathbb{P}_1)$ will approach $0$. From (\ref{Hellinq}), $V_T(\mathbb{P}_0,\mathbb{P}_1)$ will approach $0$. When $\tau  > \frac{1}{2}$, $-\frac{1}{4}n^{1-2\tau }$ will approach $0$ from the negative axis. From Taylor expansion, $e^x = 1 + x + o(x)$, we have
\begin{equation}
t \sim e^{-\frac{1}{4}n^{1-2\tau }}= 1 - \frac{1}{4}n^{1-2\tau } + o( \frac{1}{4}n^{1-2\tau }).
\end{equation}
The rate that $e^x$ approaches $1$ is almost determined by the rate that $x$ goes $0$. Therefore, $t$ approaches $1$ at the rate of $\frac{1}{4}n^{1-2\tau }$, i.e., $H^2(P,Q)$ approaches $0$ at the rate of $\frac{1}{4}n^{1-2\tau }$ when $\tau  > \frac{1}{2}$.
\begin{equation*}
H(P,Q) = \sqrt{1-t} \sim  \sqrt{\frac{1}{4}n^{1-2\tau } + o( \frac{1}{4}n^{1-2\tau })}\sim \frac{1}{2}n^{\frac{1}{2}(1-2\tau )}
\end{equation*}
Thus, the rate that $V_T(\mathbb{P}_0,\mathbb{P}_1)$ approaches $0$ is between $O(n^{1-2\tau })$ and $O(n^{\frac{1}{2}(1-2\tau )})$.
\end{proof}
The convergence rates of TVD provide a lot of information for covert communication over AWGN channels in finite blocklength regime, which are listed as follows.
\begin{Remarks}
\begin{itemize}
\item With given $\epsilon > 0$, we can only talk about finite blocklength $n$. The blocklength $n$ and the power level ($\tau $) should be chosen carefully to satisfy bounds on given decoding error probability $\epsilon$ and TVD $\delta$.
\item  Under any given $ 0 <\delta <1$, and a fixed $\tau > \frac{1}{2}$, as $n$ increases it will definitely satisfy the requirement on the upper-bound imposed on TVD.
\item If $\tau  < 1/2$, increasing $n$ will eventually violate any given upper bound $0<\delta<1$ on TVD.
\item With given $\delta$, if $p_n = C\cdot n^{-\tau }$ with proper constant $C$ and $\tau  = 1/2$, we can increase $n$ to satisfy any small decoding error probability $\epsilon$ without worrying about the violation of TVD bound $\delta$ since the total variation distance will be stationary. Moreover we can also provide the second order asymptotics in this case for $\log(M_n)$.

\item The rate can be also testified by using (\ref{improved}). In our case, we have
\begin{equation}
\begin{split}
&\sqrt{1 - (1 - H^2(\mathbb{P}_0,\mathbb{P}_1))^2} \\
= &\sqrt{1 - t^2}\\
\sim &\sqrt{1-e^{-\frac{1}{2}n^{1-2\tau }}}\\
\sim &\sqrt{1-\left[1-\frac{1}{2}n^{1-2\tau }+ \frac{1}{8}n^{2-4\tau } + \cdots\right]}\\
\sim & \frac{\sqrt{2}}{2}n^{\frac{1}{2}(1-2\tau )}.
\end{split}
\end{equation}
Hence, we have the same rate upper bound as Proposition 2.
\item The rate bound in the last proposition can also be testified from the bound of total variation distance in terms of K-L distance. From (22) in \cite{Igal},
\begin{equation}
D(P,Q)\geq \log \left(\frac{1}{1-V_T(P,Q)^2}\right).
\end{equation}
We have
\begin{equation}\label{K-L bound2}
V_T(P,Q)\leq \sqrt{1- e^{-D(P,Q)}}.
\end{equation}
From (34) in \cite{Igal},
\begin{equation}\label{K-L lower bound}
V_T(P,Q)\geq \left(\frac{1-\beta}{\log\frac{1}{\beta}}\right)D(P,Q).
\end{equation}
The K-L distance in our case can be reformulated as follows
\begin{equation}\label{K-L upp}
\begin{split}
  D(\mathbb{P}_0,\mathbb{P}_1) = &\frac{n}{2}\left[\ln(1 + \theta_n) + \frac{1}{1+ \theta_n} -1 \right]\log e\\
= & \frac{n}{2\ln2}\left[\theta_n -\frac{1}{2}\theta_n^2 + 1 -\theta_n + \theta_n^2 -1 + o(\theta_n^2)\right]\\
= & \frac{n}{2\ln2}\left[ \frac{1}{2}\theta_n^2 + o(\theta_n^2)\right]\\
\sim & \frac{1}{4\ln2}n^{1-2\tau }.
\end{split}
\end{equation}
When $\tau  > \frac{1}{2}$, it goes to $0$ at rate $O(n^{1-2\tau })$. Hence, from (\ref{K-L lower bound}), the lower bound goes to $0$ at the rate of $O(n^{1-2\tau })$. For the upper bound, from (\ref{K-L bound2}),
\begin{equation}
\begin{split}
 \sqrt{1- e^{-D(\mathbb{P}_1,\mathbb{P}_0)}} = &\sqrt{1 - \left[1 - D(\mathbb{P}_1,\mathbb{P}_0)+ o(n^{1-2\tau })\right]}\\
 \sim & \sqrt{\frac{1}{4\ln2}n^{1-2\tau }+o(n^{1-2\tau })}.
\end{split}
\end{equation}
Hence, the upper bound of the total variation distance goes to $0$ at the rate of $O(n^{\frac{1}{2}(1-2\tau )})$.
In summary, we also get that the rate that the total variation distance goes to $0$ is between $O(n^{1-2\tau })$ and $O(n^{\frac{1}{2}(1-2\tau )})$.
\end{itemize}
\end{Remarks}

\section{Numerical Results}\label{NUMP}
In this section, the numercial results are presented. The main results in Section III and Section IV are testified. Since we can only limit the effect of truncation by choosing proper $\mu$ and moderately large blocklength. In the following, the least blocklength is $500$ when $\delta = 0.01$, then the effect of selection (or truncation) is negligible. The least blocklength is even larger when $\delta = 0.01$. In these circumstances, the codewords could be regarded as Gaussian generated and the $TVD$ at the adversary is approximated as $V_T(\mathbb{P}_0,\mathbb{P}_1)$.

In Fig.\ref{Fig2} and Fig.\ref{Fig3}, the necessary condition $P_{NEC}$ and sufficient condition $P_{SUF}$ of the power for $V_T(\mathbb{P}_1\|\mathbb{P}_0) \leq \delta$ at different blocklength $n$ are plotted when $\delta$ is fixed as $0.1$ and $0.01$, respectively. They are compared with the power approximated directly from formula (\ref{app}). We can see that the sufficient condition of the power for covert constraint is quite close to the approximation when $\delta = 0.1$ or $\delta = 0.01$. The maximal value of power proper for covert constraint $V_T(\mathbb{P}_1\|\mathbb{P}_0) \leq \delta$  will always be in the zone between  two curves of sufficient and necessary conditions with $0 < \delta < 1$. In Fig.\ref{Fig6}, we plot the necessary condition $P_{NEC}$ and sufficient condition $P_{SUF}$ of the power for $V_T(\mathbb{P}_1\|\mathbb{P}_0) \leq \delta$ with different $\delta$  with fixed blocklength $n = 2000$. It is obvious that the approximation of power will be in the zone between the curve of $P_{SUF}$ and the curve of $P_{NEC}$ .
From the analytic solution in Proposition 2, the behavior of $V_T(\mathbb{P}_1\|\mathbb{P}_0)$ when the power scaling law follows $\bm{snr} = \theta_n = n^{-\tau}$ with $\tau < \frac{1}{2}$ at the main channel with different $\tau$ can be found in Fig.\ref{Fig7}. As $n$ tends to infinity, we can see that $V_T(\mathbb{P}_1\|\mathbb{P}_0)$ approaches $1$ exponentially if $\tau < \frac{1}{2}$, and the rate it approaches $0$ is polynomial if $\tau > \frac{1}{2}$. When $\tau = \frac{1}{2}$, $V_T(\mathbb{P}_1\|\mathbb{P}_0)$ will be stationary even when $n$ is very large.

We plot TVD $V_T(\mathbb{P}_1\|\mathbb{P}_0)$ from (\ref{app}), the square of the Hellinger distance from (\ref{sqHell}) , Hellinger upper bound from (\ref{improved}) and the approximation expansion from (\ref{app2}) when $\tau < \frac{1}{2}$ in Fig.\ref{Fig8}. It is obvious that the approximation from (\ref{app2}) is quite accurate and can be used in practical performance analysis of covert communication. Moreover, the validity that $V_T(\mathbb{P}_1\|\mathbb{P}_0)$ goes to $1$ exponentially is demonstrated again.

These quantities $V_T(\mathbb{P}_0,\mathbb{P}_1), \sqrt{\frac{1}{2}D(\mathbb{P}_0,\mathbb{P}_1)}$ (K-L bound), Hellinger upper bound (\ref{improved}) and the approximation of $V_T(\mathbb{P}_0,\mathbb{P}_1)$ in (\ref{alphageq})  with $\tau  > \frac{1}{2}$ are plotted in Fig.\ref{Fig9}. The accuracy of the approximation (\ref{app1}) is obvious. It is also clear that the rates that they approach $0$ are polynomial. Moreover, the validity of these bounds is testified and the relationship between them with finite block length is demonstrated.
\begin{figure}
\includegraphics[width=3.5in]{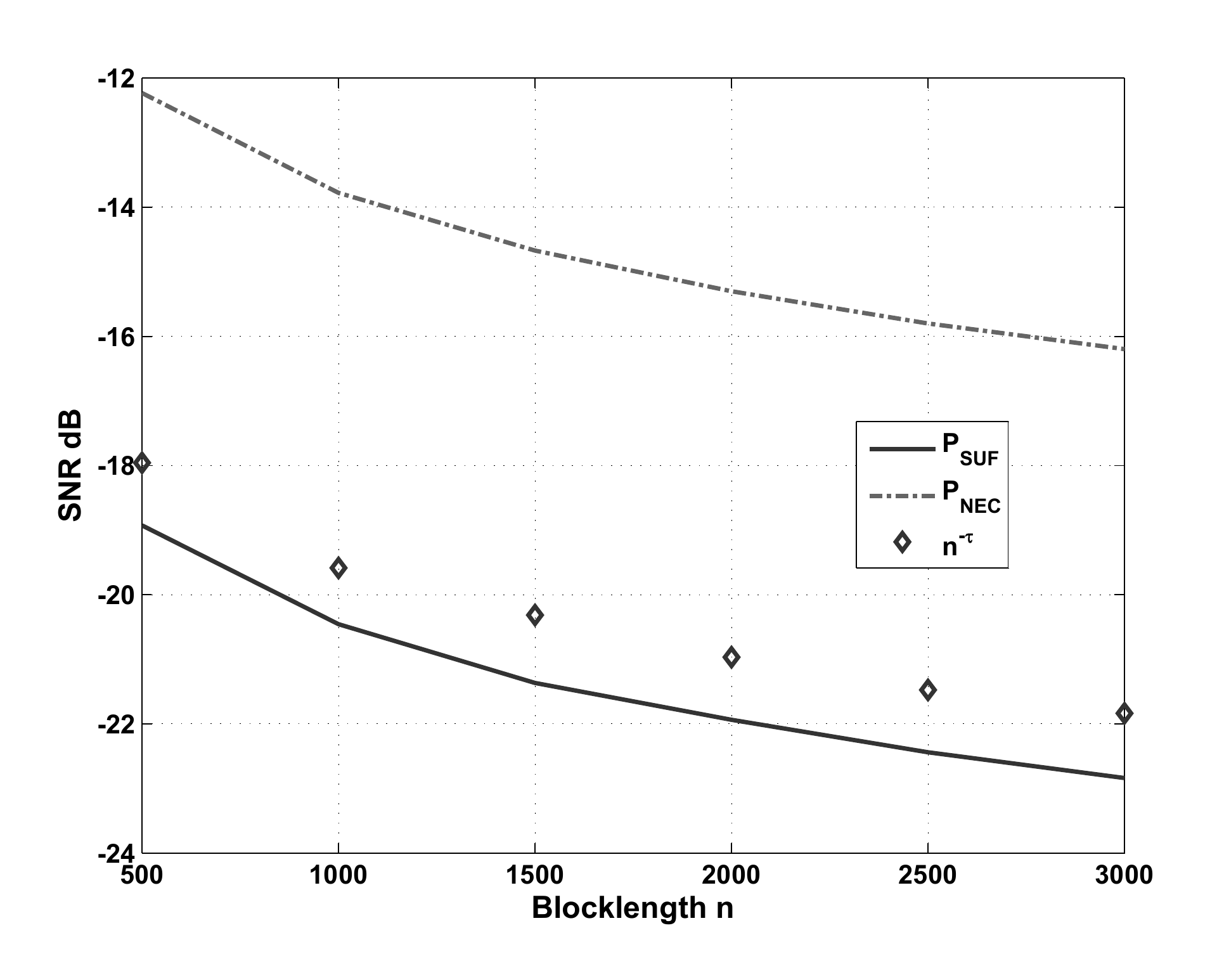}
\caption{The sufficient, necessary condition and the approximation of the power $\theta_n$ for $\delta = 0.1$.}\label{Fig2}
\end{figure}
 \begin{figure}
\includegraphics[width=3.5in]{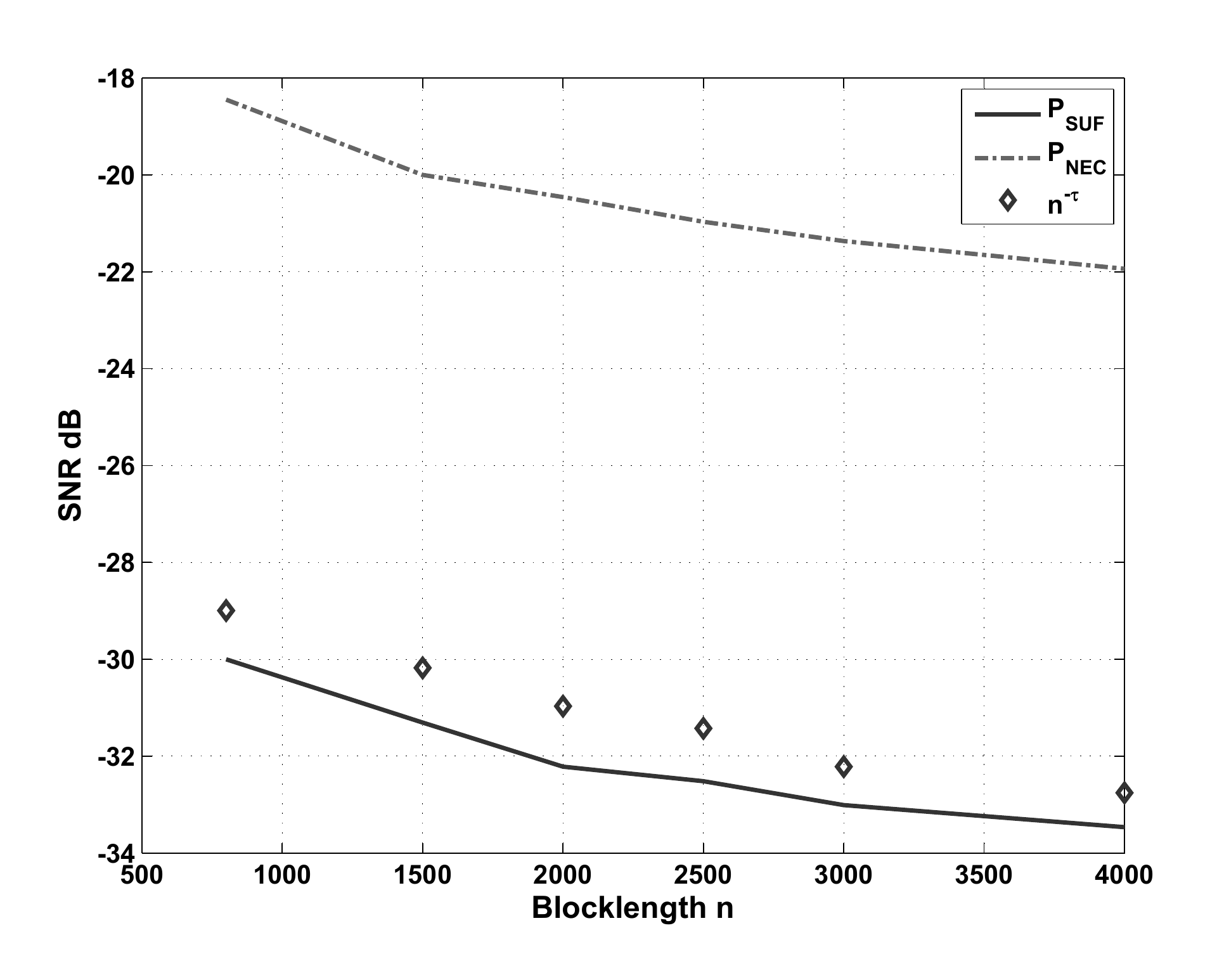}
\caption{The sufficient, necessary condition and the approximation of the power $\theta_n$ for $\delta = 0.01$.}\label{Fig3}
\end{figure}
 \begin{figure}
\includegraphics[width=3.5in]{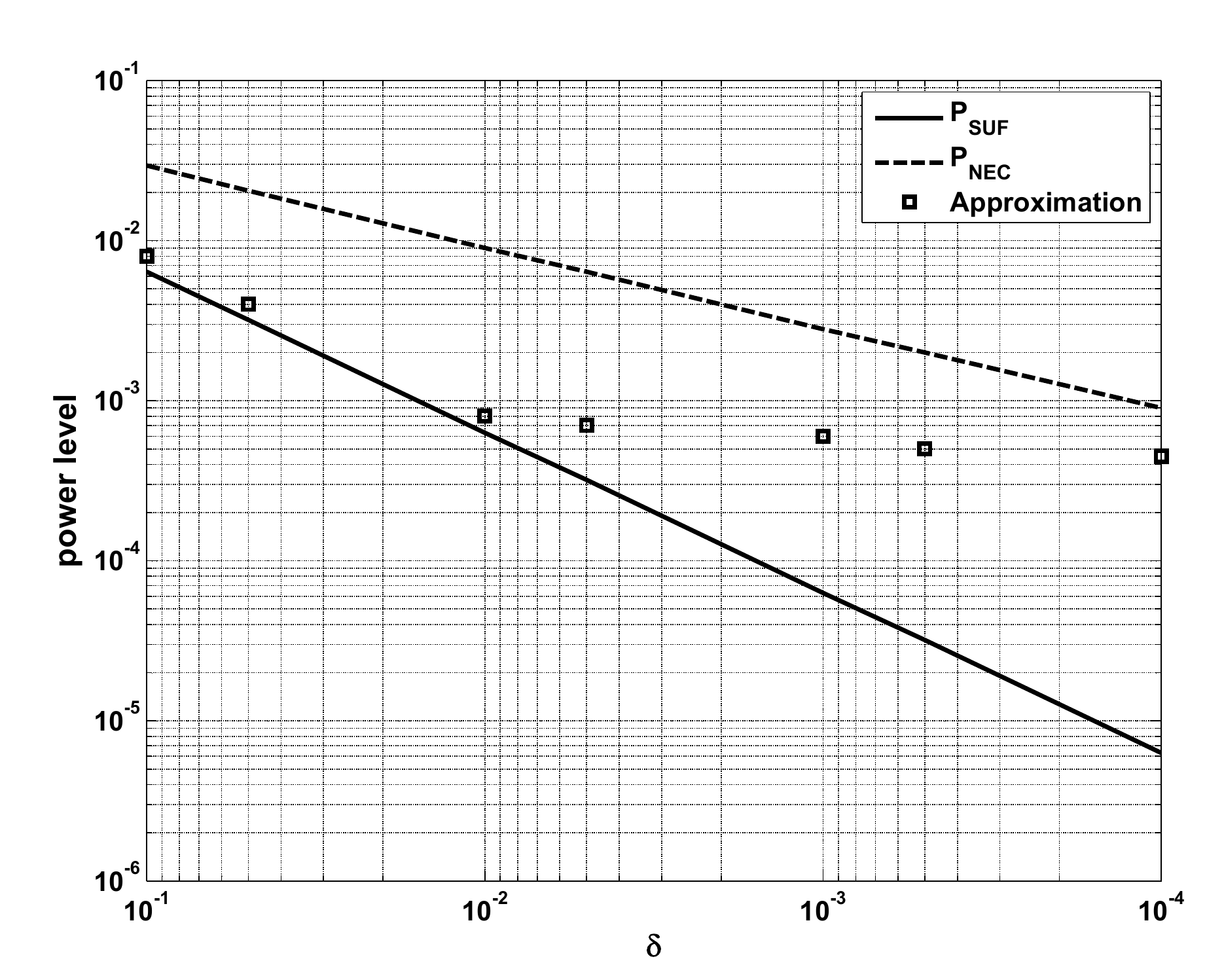}
\caption{The sufficient and necessary condition for the power and the approximation for covert communication over AWGN channel for varying $\delta$ with fixed blocklength $n = 2000$.}\label{Fig6}
\end{figure}

\begin{figure}
\includegraphics[width=3.5in]{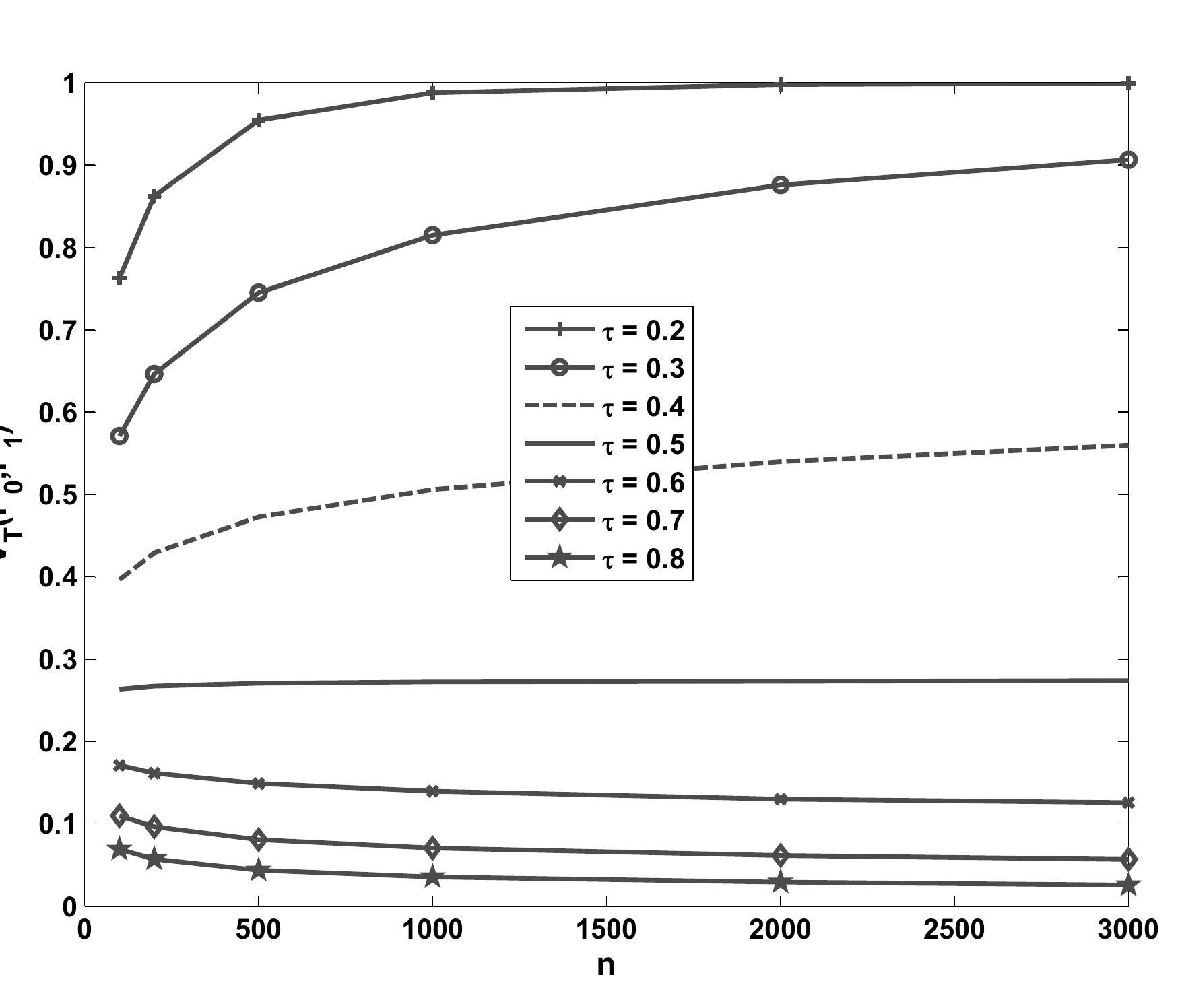}
\caption{Comparison of different total variation distances with the the code of length $n$ and different $\tau$.}\label{Fig7}
\end{figure}

\begin{figure}
\includegraphics[width=3.5in]{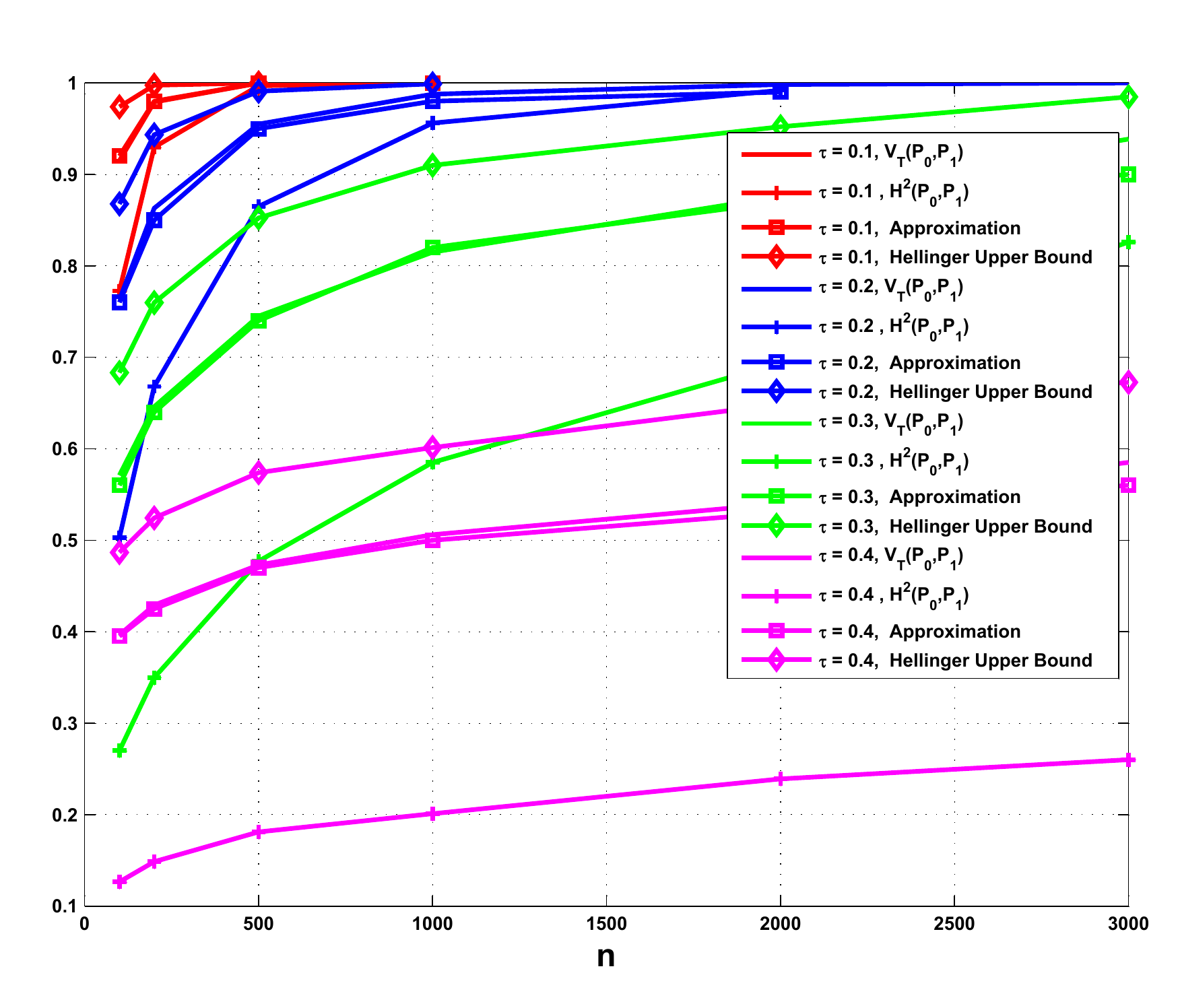}
\caption{Comparison between $V_T(\mathbb{P}_0,\mathbb{P}_1), H^2(\mathbb{P}_0,\mathbb{P}_1)$, Hellinger upper bound and the approximation by the expansions with the length of the code $n$ with $\tau < \frac{1}{2}$.}\label{Fig8}
\end{figure}

\begin{figure}
\includegraphics[width=3.5in]{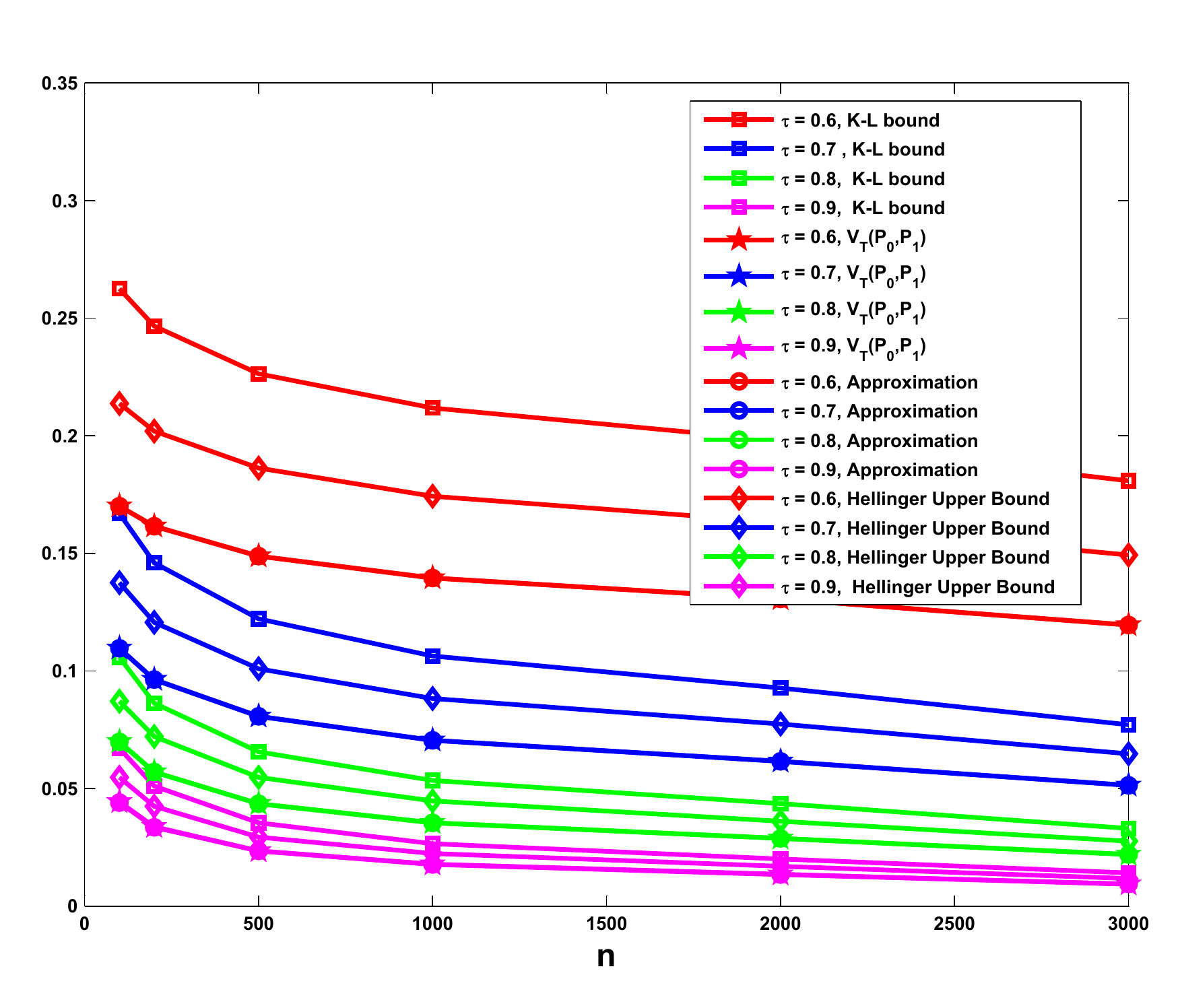}
\caption{Comparison between $V_T(\mathbb{P}_0,\mathbb{P}_1), \sqrt{\frac{1}{2}D(\mathbb{P}_0,\mathbb{P}_1)}$ (K-L bound), Hellinger upper bound and the approximation by the expansions with the length of the code $n$ with $\tau > \frac{1}{2}$. Since the square of Hellinger distance is too loose as a lower bound of the total variation distance when $\tau > \frac{1}{2}$, it is not plotted here.}\label{Fig9}
\end{figure}

\section{Conclusion}
In this work we consider covert communication over AWGN channels in finite block length regime. The maximal throughput with TVD constraint is investigated and the first and second asymptotics are obtained, which extends Square Root Law for covert communication. We also got close formula of TVD, between the distributions of the noise and the signal plus the noise at the adversary. The numerical approximation expressions for TVD with different signal noise ratio levels were further discussed, which are helpful for practical design and analysis of covert communication. Furthermore, our investigation about the convergence rates of TVD when $n\rightarrow \infty$ are meaningful for understanding the total variation distance as a metric of discrimination of Gaussian distributions with different variances. In future work we plan on investigating covert communication over MIMO systems.
\appendices
\section{Proof of Theorem \ref{analytic}} \label{Append2}

We have $1 -(\tau  + \beta) = V_T(\mathbb{P}_1,\mathbb{P}_0) = \frac{1}{2}\|\mathbb{P}_1(\mathbf{x}) - \mathbb{P}_0(\mathbf{x})\|_1$, $\mathbb{P}_0$ and $\mathbb{P}_1$ are $n$-product Gaussian distributions with zero mean and variance $\sigma^2$ and $\sigma_1^2 = \sigma^2 + p_n$, respectively. $p_n$ is the average power per symbol. We derive from (\ref{integration0}) and get (\ref{inital}) by integrating the variable in the $n$ dimension ball.
\newcounter{TempEqCnt8}
\setcounter{TempEqCnt8}{\value{equation}}
\setcounter{equation}{131}
\begin{figure*}[!t]
\normalsize

\begin{equation}\label{integration0}
\begin{split}
&\|\mathbb{P}_0(\mathbf{x}) - \mathbb{P}_1(\mathbf{x})\|_1\\
= &\iint\cdots\int_{x_1,x_2,...,x_n}\left|\frac{1}{(2\pi\sigma^2)^{n/2}}e^{-\frac{\sum_{i=1}^{n}x_i^2}{2\sigma^2}} - \frac{1}{(2\pi\sigma_1^2)^{n/2}}e^{-\frac{\sum_{i=1}^{n}x_i^2}{2\sigma_1^2}}\right|dx_1\cdots dx_n\\
= &\iint\cdots\int_{x_1,x_2,...,x_n}\frac{1}{(2\pi\sigma^2)^{n/2}}e^{-\frac{\sum_{i=1}^{n}x_i^2}{2\sigma^2}} \left|1 - (\frac{\sigma^2}{\sigma_1^2})^{n/2} e^{-\frac{\sum_i x_i^2}{2}(\frac{1}{\sigma_1^2}-\frac{1}{\sigma^2})}\right|dx_1\cdots dx_n \\
\overset{(a)}{=} & \iint\cdots\int_{\sum_i x_i^2 \leq \frac{n(\sigma_1\sigma)^2\ln(1 + \frac{p_n}{\sigma^2})}{p_n}}\left(\frac{1}{(2\pi\sigma^2)^{n/2}}e^{-\frac{\sum_{i=1}^{n}x_i^2}{2\sigma^2}} - \frac{1}{(2\pi\sigma_1^2)^{n/2}}e^{-\frac{\sum_{i=1}^{n}x_i^2}{2\sigma_1^2}}\right)dx_1\cdots dx_n\\
+  & \iint\cdots\int_{\sum_i x_i^2 \geq \frac{n(\sigma_1\sigma)^2\ln(1 + \frac{p_n}{\sigma^2})}{p_n} }\left(\frac{1}{(2\pi\sigma_1^2)^{n/2}}e^{-\frac{\sum_{i=1}^{n}x_i^2}{2\sigma_1^2}} - \frac{1}{(2\pi\sigma^2)^{n/2}}e^{-\frac{\sum_{i=1}^{n}x_i^2}{2\sigma^2}}\right)dx_1\cdots dx_n\\
\overset{(b)}{=}& 2 \cdot\iint\cdots\int_{\sum_i x_i^2 \leq \frac{n(\sigma_1\sigma)^2\ln(1 + \frac{p_n}{\sigma^2})}{p_n}}\left(\frac{1}{(2\pi\sigma^2)^{n/2}}e^{-\frac{\sum_{i=1}^{n}x_i^2}{2\sigma^2}} - \frac{1}{(2\pi\sigma_1^2)^{n/2}}e^{-\frac{\sum_{i=1}^{n}x_i^2}{2\sigma_1^2}}\right)dx_1\cdots dx_n\\
\end{split}
\end{equation}
\begin{equation}\label{inital}
\begin{split}
V_T(\mathbb{P}_1,\mathbb{P}_0) = \cdot\iint\cdots\int_{\sum_i x_i^2 \leq \frac{n(\sigma_1\sigma)^2\ln(1 + \frac{p_n}{\sigma^2})}{p_n}}\left(\frac{1}{(2\pi\sigma^2)^{n/2}}e^{-\frac{\sum_{i=1}^{n}x_i^2}{2\sigma^2}} - \frac{1}{(2\pi\sigma_1^2)^{n/2}}e^{-\frac{\sum_{i=1}^{n}x_i^2}{2\sigma_1^2}}\right)dx_1\cdots dx_n.
\end{split}
\end{equation}
\hrulefill
\vspace*{4pt}
\end{figure*}

\setcounter{equation}{133}%
In the derivation, the equation (a) follows from the following inequalities:
\begin{equation}
\begin{split}
&\frac{1}{(2\pi\sigma^2)^{n/2}}e^{-\frac{\sum_{i=1}^{n}x_i^2}{2\sigma^2}} - \frac{1}{(2\pi\sigma_1^2)^{n/2}}e^{-\frac{\sum_{i=1}^{n}x_i^2}{2\sigma_1^2}} \geq 0\\
\iff &1 - (\frac{\sigma^2}{\sigma_1^2})^{n/2}e^{-\frac{\sum_i x_i^2}{2}(\frac{1}{\sigma_1^2}-\frac{1}{\sigma^2})}\geq 0 \\
\iff &e^{\frac{\sum_i x_i^2}{2}(\frac{1}{\sigma_1^2}-\frac{1}{\sigma^2})} \geq (\frac{\sigma^2}{\sigma_1^2})^{n/2}\\
\iff &\frac{\sum_i x_i^2}{2}(\frac{1}{\sigma_1^2}-\frac{1}{\sigma^2}) \geq \frac{n}{2}(\ln\sigma^2 -\ln\sigma_1^2)\\
\iff & \sum_i x_i^2 \leq \frac{n(\ln\sigma^2-\ln(\sigma^2 + p_n))}{\frac{1}{\sigma_1^2}-\frac{1}{\sigma^2}}\\
\iff & \sum_i x_i^2 \leq \frac{n(\sigma_1\sigma)^2\ln(1 + \frac{p_n}{\sigma^2})}{p_n}.
\end{split}
\end{equation}

The equation (b) follows from the following equalities:
\begin{equation}
\begin{split}
&\iint\cdots\int_{\sum_i x_i^2 \geq \frac{n(\sigma_1\sigma)^2\ln(1 + \frac{p_n}{\sigma^2})}{p_n}}\frac{1}{(2\pi\sigma^2)^{n/2}}e^{-\frac{\sum_{i=1}^{n}x_i^2}{2\sigma^2}} \\
=& 1 - \iint\cdots\int_{\sum_i x_i^2 \leq \frac{n(\sigma_1\sigma)^2\ln(1 + \frac{p_n}{\sigma^2})}{p_n}}\frac{1}{(2\pi\sigma^2)^{n/2}}e^{-\frac{\sum_{i=1}^{n}x_i^2}{2\sigma^2}} \\
& \iint\cdots\int_{\sum_i x_i^2 \geq \frac{n(\sigma_1\sigma)^2\ln(1 + \frac{p_n}{\sigma^2})}{p_n}}\frac{1}{(2\pi\sigma^2)^{n/2}}e^{-\frac{\sum_{i=1}^{n}x_i^2}{2\sigma^2}} \\
=& 1 - \iint\cdots\int_{\sum_i x_i^2 \leq \frac{n(\sigma_1\sigma)^2\ln(1 + \frac{p_n}{\sigma^2})}{p_n}}\frac{1}{(2\pi\sigma^2)^{n/2}}e^{-\frac{\sum_{i=1}^{n}x_i^2}{2\sigma^2}}.
\end{split}
\end{equation}

Denote $R^2= \frac{n(\sigma_1\sigma)^2\ln(1 + \frac{p_n}{\sigma^2})}{p_n}$;
to calculate the integration of (\ref{inital}), we need to calculate the following integration,
\begin{equation}\label{ing1}
 \iint\cdots\int_{\sum_i x_i^2 \leq R^2 }\frac{1}{(2\pi\sigma^2)^{n/2}}e^{-\frac{\sum_{i=1}^{n}x_i^2}{2\sigma^2}}dx_i\cdots dx_n.
 \end{equation}
 By the following variable substitution,
 \begin{align}\notag
 \begin{cases}
  x_1 = r\cos\theta_1 \\
  x_2 = r\sin\theta_1 cos\theta_2\\
  \cdots \\
  x_{n-1}  = r \sin\theta_1\sin\theta_2\sin\theta_3\cdots \cos\theta_{n-1} \\
  x_n  = r \sin\theta_1\sin\theta_2\sin\theta_3\cdots \sin\theta_{n-1} \\
 \end{cases}\\
 0\leq r \leq R, 0 < \theta_1, \theta_2, \cdots, \theta_{n-2} < \pi, 0 < \theta_{n-1}< 2\pi \notag
 \end{align}
the integration can be rewritten as (\ref{integration21}).
\newcounter{TempEqCnt2}
\setcounter{TempEqCnt2}{\value{equation}}
\setcounter{equation}{136}
\begin{figure*}[!t]
\normalsize
\begin{equation}\label{integration21}
\begin{split}
& \, \  \,\iint\cdots\int_{\sum_i x_i^2 \leq R^2 }\frac{1}{(2\pi\sigma^2)^{n/2}}e^{-\frac{\sum_{i=1}^{n}x_i^2}{2\sigma^2}}dx_i\cdots dx_n\\
 & = \iint\cdots\int_{0 < r^2 \leq R^2, 0 < \theta_1, \theta_2, \cdots, \theta_{n-2} < \pi, 0 < \theta_{n-1}< 2\pi  }\frac{1}{(2\pi\sigma^2)^{n/2}}e^{-\frac{r^2}{2\sigma^2}} r^{n-1}\sin ^{n-2}\theta_1\sin ^{n-3}\theta_2\cdots \sin \theta_{n-2}dr d\theta_1\cdots d\theta_{n-1}.\\
 & = \int_0^{2\pi} d\theta_{n-1} \int_0^{\pi}d \theta_{n-2}\cdots \int_0^{\pi}d\theta_1\int_0^R \frac{1}{(2\pi\sigma^2)^{n/2}}e^{-\frac{r^2}{2\sigma^2}} r^{n-1}\sin ^{n-2}\theta_1\sin ^{n-3}\theta_2\cdots \sin \theta_{n-2}dr\\
 & = \int_0^R\frac{2\pi}{(2\pi\sigma^2)^{n/2}}e^{-\frac{r^2}{2\sigma^2}} r^{n-1}d r\int_0^{\pi}\sin ^{n-2}\theta_1d\theta_1\int_0^{\pi}\sin ^{n-3}\theta_2d\theta_2\cdots\int_0^{\pi}\sin \theta_{n-2}d\theta_{n-2}\\
 & = \int_0^R\frac{2\pi}{(2\pi\sigma^2)^{n/2}}e^{-\frac{r^2}{2\sigma^2}} r^{n-1}dr \cdot B(\frac{1}{2},\frac{n-1}{2})B(\frac{1}{2},\frac{n-2}{2})\cdots B(\frac{1}{2},1)\\
 & = \frac{[\Gamma({\frac{1}{2}})]^{n-2}}{\Gamma(\frac{n}{2})}\int_0^R\frac{2\pi}{(2\pi\sigma^2)^{n/2}}e^{-\frac{r^2}{2\sigma^2}} r^{n-1}dr\\
 & = \frac{\pi^{n/2}}{\Gamma(\frac{n}{2})}\int_0^R\frac{2}{(2\pi\sigma^2)^{n/2}}e^{-\frac{r^2}{2\sigma^2}} r^{n-1}dr\\
\end{split}
\end{equation}
\hrulefill
\vspace*{4pt}
\end{figure*}
\setcounter{equation}{137}%

In (\ref{integration21}), the function $B(x,y)$ denotes the well known Beta function.
If $x_i$ with $i = 1,\cdots,n$ follow i.i.d Gaussian distribution with zero mean and variance $\sigma^2$, denote $X = x_1^2 + \cdots + x_n^2$, then the random variable $X$ follows central $\chi$ distribution, the pdf of $X$ is written as
\begin{equation}\notag
p(x) =
\begin{cases}
\frac{1}{2^{n/2}\Gamma(n/2)\sigma^n}x^{\frac{n}{2}-1}e^{-\frac{x}{2\sigma^2}}, \ \ \ x > 0 \\
0  \ \ \ \ \ \ \ \ \ \ \ \ \ \ \ \ \ \ \ \ \ \ \ \ \ \ \ \ \ \ \ \ \   else
\end{cases}
\end{equation}
The cdf of $X$ is following when $n = 2m$ is even:
\begin{equation}\notag
F(x) =
\begin{cases}
1 - e^{-\frac{x}{2\sigma^2}}\sum_{k=0}^{m-1}\frac{1}{k!}(\frac{x}{2\sigma^2})^2, \ \ \ x > 0 \\
0  \ \ \ \ \ \ \ \ \ \ \ \ \ \ \ \ \ \ \ \ \ \ \ \ \ \ \ \ \ \ \ \ \   else
\end{cases}
\end{equation}

Note that $X = x_1^2 + \cdots + x_n^2$, we have the following equation from (\ref{integration21}),
\begin{equation}\label{integration21}
\begin{split}
 &\frac{\pi^{n/2}}{\Gamma(\frac{n}{2})}\int_0^R\frac{2}{(2\pi\sigma^2)^{n/2}}e^{-\frac{r^2}{2\sigma^2}} r^{n-1}dr\\
&\overset{r^2 = x}{\Longrightarrow}\frac{\pi^{n/2}}{\Gamma(\frac{n}{2})}\frac{2}{(2\pi\sigma^2)^{n/2}}\int_0^{R^2}e^{-\frac{x}{2\sigma^2}} x^{\frac{n-1}{2}}\frac{1}{2}x^{-\frac{1}{2}}dx\\
& = \frac{1}{2^{n/2}\Gamma(n/2)\sigma^n}\int_0^{R^2}x^{\frac{n}{2}-1}e^{-\frac{x}{2\sigma^2}}dx.
\end{split}
\end{equation}

Consequently, the integration in (\ref{ing1}) can be reformulated as
\begin{equation}
P\{X < R^2\} = \frac{1}{2^{n/2}\Gamma(n/2)\sigma^n}\int_0^{R^2}x^{\frac{n}{2}-1}e^{-\frac{x}{2\sigma^2}}dx.
\end{equation}

Denote $Y$ and $X$ as the random variable corresponding the sums of i.i.d Gaussian random variable with variance $\sigma_1$ and $\sigma$, respectively, then the equation (\ref{inital}) can be rewritten as following
\begin{equation}\label{integration2}
\begin{split}
&V_T(\mathbb{P}_0,\mathbb{P}_1)\\
 = &\frac{1}{2}\|\mathbb{P}_1(\mathbf{x}) - \mathbb{P}_0(\mathbf{x})\|_1\\
= &P\{X < R^2\} - P\{Y < R^2\}\\
= &\frac{1}{2^{\frac{n}{2}}\Gamma(n/2)}\int_0^{R^2}\left(\frac{1}{\sigma^n}x^{\frac{n}{2}-1}e^{-\frac{x}{2\sigma^2}}- \frac{1}{\sigma_1^n}x^{\frac{n}{2}-1}e^{-\frac{x}{2\sigma_1^2}}\right)dx\\
= &\frac{1}{2^{\frac{n}{2}}\Gamma(n/2)}\int_0^{R^2}x^{\frac{n}{2}-1}\left(\frac{1}{\sigma^n}e^{-\frac{x}{2\sigma^2}}- \frac{1}{\sigma_1^n}e^{-\frac{x}{2\sigma_1^2}}\right)dx.
\end{split}
\end{equation}

Now we consider the integration
\begin{equation}
\int_0^{R^2} \frac{1}{\sigma^n}x^{\frac{n}{2}-1}e^{-\frac{x}{2\sigma^2}}dx.
\end{equation}
Denote $\frac{x}{2\sigma^2} = t$, then we get
\begin{equation}
\begin{split}
&\int_0^{R^2} \frac{1}{\sigma^n}x^{\frac{n}{2}-1}e^{-\frac{x}{2\sigma^2}}dx\\
&=\frac{1}{\sigma^n}\int_0^{R^2}(2\sigma^2t)^{\frac{n}{2}-1}e^{-t}dx\\
&=\frac{1}{\sigma^n}\int_0^{R^2/2\sigma^2}2^{\frac{n}{2}-1}\sigma^{n-2}t^{\frac{n}{2}-1}e^{-t}2\sigma^2dt\\
&=\frac{1}{\sigma^n}\int_0^{R^2/2\sigma^2}2^{\frac{n}{2}}\sigma^{n}t^{\frac{n}{2}-1}e^{-t}dt\\
&=2^{\frac{n}{2}}\int_0^{R^2/2\sigma^2}t^{\frac{n}{2}-1}e^{-t}dt\\
&=2^{\frac{n}{2}}\gamma(\frac{n}{2},\frac{R^2}{2\sigma^2})
\end{split}
\end{equation}
where $\gamma(a,z) = \int_0^ze^{-t}t^{a-1}dt$ is the incomplete gamma function.

By the same reasoning, we have
\begin{equation}
\begin{split}
&\int_0^{R^2} \frac{1}{\sigma_1^n}x^{\frac{n}{2}-1}e^{-\frac{x}{2\sigma_1^2}}dx\\
=&2^{\frac{n}{2}}\gamma(\frac{n}{2},\frac{R^2}{2\sigma_1^2}).
\end{split}
\end{equation}
Therefore, the integration in (\ref{integration2}) is expressed as
\begin{equation}\label{integration3}
\begin{split}
V_T(\mathbb{P}_1,\mathbb{P}_0)= &\frac{1}{2}\|\mathbb{P}_1(\mathbf{x}) - \mathbb{P}_0(\mathbf{x})\|_1\\
= & \frac{1}{2^{\frac{n}{2}}\Gamma(n/2)} 2^{\frac{n}{2}}\left[\gamma(\frac{n}{2},\frac{R^2}{2\sigma^2})-\gamma(\frac{n}{2},\frac{R^2}{2\sigma_1^2})\right]\\
=& \frac{1}{\Gamma(n/2)}\left[\gamma(\frac{n}{2},\frac{R^2}{2\sigma^2})-\gamma(\frac{n}{2},\frac{R^2}{2\sigma_1^2})\right]\\
=& \frac{1}{\Gamma(n/2)} \int_{\frac{1}{2}n\sigma^2 \ln(1+ \frac{p_n}{\sigma^2})/p_n}^{\frac{1}{2}n\sigma_1^2 \ln(1+ \frac{p_n}{\sigma^2})/p_n}e^{-t}t^{n/2-1}dt\\
=& \frac{1}{\Gamma(n/2)}\left[\gamma(\frac{n}{2},f(\theta_n))-\gamma(\frac{n}{2},g(\theta_n))\right].
\end{split}
\end{equation}
Note that the Gamma function is related to the incomplete gamma function by $\Gamma(n/2) = \gamma(n/2, \infty)= \int_0^{\infty}e^{-t}t^{\frac{n}{2}-1}dt$.
As $R^2= \frac{n(\sigma_1\sigma)^2\ln(1 + \frac{p_n}{\sigma^2})}{p_n}$, if we denote $\theta_n = \frac{p_n}{\sigma^2}$, i.e., $\bm{snr}$, $f(\theta_n) = \frac{R^2}{2\sigma^2}$ and $g(\theta_n) = \frac{R^2}{2\sigma_1^2}$, we have the following relationships between these variables,
\begin{equation}\label{f}
\begin{split}
f(\theta_n) =&\frac{1}{2}n\sigma_1^2 \ln(1+ \frac{p_n}{\sigma^2})/p_n\\
 = &\frac{1}{2}n\frac{p_n+ \sigma^2}{p_n}\ln(1+ \frac{p_n}{\sigma^2})\\
 = &\frac{1}{2}n\left(1+ \frac{1}{\theta_n}\right)\ln(1+ \theta_n), \\
\end{split}
\end{equation}
\begin{equation}\label{g}
\begin{split}
g(\theta_n) = &\frac{R^2}{2\sigma_1^2} = \frac{1}{2}n\sigma^2 \ln(1+ \frac{p_n}{\sigma^2})/p_n\\
 = &\frac{1}{2}n\frac{\ln(1+ \theta_n)}{\theta_n}, \\\
 \end{split}
 \end{equation}
 \begin{equation}
 f(\theta_n) - g(\theta_n) = \theta_n g(\theta_n),
\end{equation}
\begin{equation}
\frac{f(\theta_n)}{g(\theta_n)} = 1 + \theta_n.
\end{equation}
\section{Proof of Theorem \ref{approximation}}\label{APPend3}

 This proof consists three steps. At first the numerical relationship between $f(\theta_n)$ and $g(\theta_n)$ is discussed, and then clarify their roles in the expansions of the incomplete gamma functions. At last we get different expansions for TVD in different cases.
From the equations (\ref{as}), (\ref{f}) and (\ref{g}),
\begin{eqnarray}\label{ftheta_n}\notag
f(\theta_n) \sim &\frac{1}{2}n\left(1+ \frac{1}{\theta_n}\right)(\theta_n - \frac{\theta_n^2}{2}+ \frac{\theta_n^3}{3} + O(\theta_n^4) )\notag\\
\sim & \frac{1}{2}n[1 + \frac{\theta_n}{2} - \frac{1}{6}\theta_n^2 + O(\theta_n^3)] > \frac{1}{2}n,
\end{eqnarray}

\begin{eqnarray}\label{gtheta_n}\notag
g(\theta_n) \sim &\frac{1}{2}n\frac{\theta_n - \frac{\theta_n^2}{2}+ \frac{1}{3}\theta_n^3 + O(\theta_n^4)}{\theta_n}\notag\\
\sim & \frac{1}{2}n[1 - \frac{\theta_n}{2}+ \frac{1}{3}\theta_n^2 + O(\theta_n^3)] < \frac{1}{2}n.
\end{eqnarray}
In addition, the following equations are obvious,
\begin{equation}\label{ft1}
\begin{split}
f(\theta_n) -\frac{n}{2} &= \frac{1}{2}n(1 + \frac{1}{\theta_n})\ln(1 + \theta_n)-\frac{n}{2}\\
&= \frac{1}{2}n[\frac{\theta_n}{2} - \frac{1}{6}\theta_n^2 + \frac{1}{12}\theta_n^3 + \cdots]\\
&= \frac{1}{2}n\times x
\end{split}
\end{equation}
where $x = \frac{\theta_n}{2} - \frac{1}{6}\theta_n^2 + \frac{1}{12}\theta_n^3 + \cdots = \sum_{j=1}^{\infty}(-1)^{j+1}(\frac{1}{j}-\frac{1}{j+1})\theta_n^j \rightarrow 0$ with $n\rightarrow \infty$.
\begin{equation}
\begin{split}
g(\theta_n)  -\frac{n}{2} &= \frac{1}{2}n\frac{\ln(1 + \theta_n)}{\theta_n} - \frac{n}{2}\\
&=  \frac{1}{2}n[ - \frac{\theta_n}{2}+ \frac{1}{3}\theta_n^2 - \frac{1}{4}\theta_n^3 + \cdots] \\
& = -\frac{1}{2}n\times y
\end{split}
\end{equation}
where $y =  \frac{\theta_n}{2} - \frac{1}{3}\theta_n^2 + \frac{1}{4}\theta_n^3 + \cdots = \sum_{j=1}^{j=\infty} (-1)^{j+1}\frac{1}{j+1}\theta_n^j\rightarrow 0$ with $ n\rightarrow \infty$.
\begin{equation}
\begin{split}
\frac{f(\theta_n)} {\frac{n}{2}} &= \frac{1}{2}n(1 + \frac{1}{\theta_n})\ln(1 + \theta_n)/\frac{n}{2}\\
&= 1 + \frac{\theta_n}{2} - \frac{1}{6}\theta_n^2 + \frac{1}{12}\theta_n^3 + \cdots\\
&= 1 + x
\end{split}
\end{equation}
\begin{equation}\label{gt1}
\begin{split}
\frac{g(\theta_n)}{\frac{n}{2} }&= \frac{\frac{1}{2}n\frac{\ln(1 + \theta_n)}{\theta_n} }{ \frac{n}{2}}\\
&=  1 - \frac{\theta_n}{2}+ \frac{1}{3}\theta_n^2 - \frac{1}{4}\theta_n^3 + \cdots \\
& = 1 - y.
\end{split}
\end{equation}
From the above analysis, we have
\begin{equation}\label{eq4}
-\frac{1}{2}n(x + y) = g(\theta_n) -f(\theta_n) =  -\frac{1}{2}n\ln(1 + \theta_n),
\end{equation}
\begin{equation}\label{gt2}
\frac{1 + x}{1 - y} = \frac{f(\theta_n)}{g(\theta_n)} = 1 + \theta_n.
\end{equation}

Now let $a = \frac{n}{2}-1$, and $z$ equals $f(\theta_n)$ and $g(\theta_n)$, respectively.
We have the following facts,
\begin{enumerate}
 \item $f(\theta_n)$ and $g(\theta_n)$ are on the right and left side of $a =\frac{n}{2}-1$ on $\mathcal{R}$, respectively.
\item Given $\theta_n$, $f(\theta_n) -(\frac{n}{2}-1)$ and $g(\theta_n) -(\frac{n}{2}-1)$ tend to $-\infty$ and $\infty$, respectively if $n\rightarrow \infty$, which implies that we can approximate them by (\ref{gammaexpan1}) and (\ref{Gammaexpan2}) when $n$ is large in case 1 and case 2 respectively. The premise condition for the above two expansions is that
    $ \left|a - z\right| = O(a^{1/2 + \epsilon})$, which implies the exponent of $ n\theta$ should satisfy
    $$1- \tau \geq \frac{1}{2},$$ that is $$\tau \leq \frac{1}{2}.$$
\item When $\theta_n$ is small with a given $n$, from (\ref{ftheta_n}) and (\ref{gtheta_n}), $f(\theta_n)$ and $g(\theta_n)$ will be very close to $\frac{n}{2}-1$ , which implies that we can approximate them by (\ref{Gammaexpan3}). Note that the premise condition is that
    $ a - z = o(a^{2/3})$, which implies the exponent of $ n\theta$ should satisfy
    $$1- \tau \leq \frac{2}{3},$$ that is $$\alpha \geq \frac{1}{3}.$$ Hence, $V_T(\mathbb{P}_1,\mathbb{P}_0)$ could be approximated by the expansions from (\ref{Gammaexpan3}) if $\tau > \frac{1}{2}$.
\end{enumerate}
Now we consider the expansions for $V_T(\mathbb{P}_1,\mathbb{P}_0)$ when $\tau \geq \frac{1}{2}$ and $\tau < \frac{1}{2}$, respectively.
First, from (\ref{Gamma0}),
\begin{equation}\label{Gammaexpan}
n! \sim e^{-n}n^n\sqrt{2\pi n}
\end{equation}
By Legendre's duplication formula,
\begin{equation}\label{Gamma}
\sqrt{\pi}\Gamma(2z) = 2^{2z-1}\Gamma(z)\Gamma(z+\frac{1}{2})
\end{equation}
In our setting, $z$ is a integer $n$ , hence
$$\Gamma(n+\frac{1}{2}) = \dbinom{n-\frac{1}{2}}{n}\Gamma(n)\sqrt{\pi} \sim \sqrt{\pi}\cdot\Gamma(n).$$
 Therefore from (\ref{Gamma})
\begin{equation}\label{Gamman2}
\begin{split}
& 2^{n-1}\sqrt{\pi}\Gamma(\frac{n}{2})^2 \sim e^{-n}n^n\sqrt{2\pi n}\cdot\sqrt{\pi}\\
&\Rightarrow \Gamma(\frac{n}{2}) \sim e^{-\frac{n}{2}}(\frac{n}{2})^{\frac{n}{2}}n^{\frac{1}{4}}\sqrt{\pi}\cdot2^{\frac{5}{4}}
\end{split}
\end{equation}
the detailed expansion for our approximation of $V_T(\mathbb{P}_0,\mathbb{P}_1)$ when $\tau \geq \frac{1}{2}$.
\begin{equation}\label{alphageq}
\begin{split}
&V_T(\mathbb{P}_0,\mathbb{P}_1)= \\
&= \frac{1}{\Gamma(n/2)}\left[\gamma(\frac{n}{2},f(\theta_n))-\gamma(\frac{n}{2},g(\theta_n))\right]\\
&\overset{(a)}{=}\frac{1}{\Gamma(n/2)}\left[\Gamma(\frac{n}{2},g(\theta_n))-\Gamma(\frac{n}{2},f(\theta_n))\right]\\
&\sim \frac{ e^{-a}a^{a+1}\sum_{k=0}^{\infty}c_k(a)\left[\Phi_k(a,g(\theta)) -  \Phi_k(a,f(\theta))\right]}{\Gamma(a+1)}\\
&\overset{b}{\sim} \frac{ e \cdot(1- \frac{1}{a+1})^{a+1}}{(2a+2)^{\frac{1}{4}}\sqrt{\pi}\cdot2^{\frac{5}{4}}}\sum_{k=0}^{\infty}c_k(a)\left[\Phi_k(a,g(\theta)) -  \Phi_k(a,f(\theta))\right]\\
&\overset{(c)}{\sim} \frac{1}{n^{\frac{1}{4}}\sqrt{\pi}\cdot2^{\frac{5}{4}}}\sum_{k=0}^{\infty}c_k(a)\left[\Phi_k(a,g(\theta)) -  \Phi_k(a,f(\theta))\right]\\
\end{split}
\end{equation}
 where (a) is from $\Gamma(a,z) = \Gamma(a) - \gamma(a,z)$, (b) is from (\ref{Gamman2}), $\Gamma(a+1) = \Gamma(\frac{n}{2}) \sim e^{-\frac{n}{2}}(\frac{n}{2})^{\frac{n}{2}}(n)^{\frac{1}{4}}\sqrt{\pi}\cdot2^{\frac{5}{4}} = e^{-a-1}(a+1)^{a+1}(2a+2)^{\frac{1}{4}}\sqrt{\pi}\cdot2^{\frac{5}{4}}$ and (c) is from $ \lim_{a \to \infty}(1- \frac{1}{a+1})^{a+1} =  e^{-1}$.

 When $\tau < \frac{1}{2}$, $V_T(\mathbb{P}_1,\mathbb{P}_0)$ could be rewritten as
\begin{equation}
\begin{split}
&\frac{1}{\Gamma(\frac{n}{2})}\left[\gamma(\frac{n}{2},f(\theta_n)) - \gamma(\frac{n}{2},g(\theta_n))\right]\\
= &\frac{1}{\Gamma(\frac{n}{2})}\left[\Gamma(\frac{n}{2})- \Gamma(\frac{n}{2},f(\theta_n))-\gamma(\frac{n}{2},g(\theta_n))\right]\\
=& 1-\frac{1}{\Gamma(\frac{n}{2})}\left[ \Gamma(\frac{n}{2},f(\theta_n))+\gamma(\frac{n}{2},g(\theta_n))\right].
\end{split}
\end{equation}

We have the following asymptotic expansion for $\Gamma(\frac{n}{2},f(\theta_n))+\gamma(\frac{n}{2},g(\theta_n))$:
\begin{equation}\label{expansion1}
\begin{split}
&\frac{1}{\Gamma(\frac{n}{2})}\left[ \Gamma(\frac{n}{2},f(\theta_n))+\gamma(\frac{n}{2},g(\theta_n))\right]\\
\sim \,\ \,&  e^{-f(\theta_n)+ \frac{n}{2}}\left(\frac{f(\theta_n)}{\frac{n}{2}}\right)^{\frac{n}{2}}\frac{1}{\sqrt{\pi}n^{\frac{1}{4}}}\sum_{k=0}^{\infty}\frac{ (-1)^k k! c_k(\frac{n}{2}-1)}{(f(\theta_n)+1 -\frac{n}{2})^{k+1}} \\
 + \,\ \, &  e^{-g(\theta_n)+ \frac{n}{2}}\left(\frac{g(\theta_n)}{\frac{n}{2}}\right)^{\frac{n}{2}}\frac{1}{\sqrt{\pi}n^{\frac{1}{4}}}\sum_{k=0}^{\infty}\frac{ (-1)^{k+1} k! c_k(\frac{n}{2}-1)}{(g(\theta_n)+1 -\frac{n}{2})^{k+1}}\\
 + \,\ \,&e^{-g(\theta_n)+ \frac{n}{2}}\left(\frac{g(\theta_n)}{\frac{n}{2}}\right)^{\frac{n}{2}}\frac{1}{\sqrt{\pi}n^{\frac{1}{4}}}\sum_{k=0}^{\infty}c_k(\frac{n}{2}-1) e^{g(\theta_n)+1 -\frac{n}{2}}\\
 &\cdot\sum_{j=0}^k\frac{(-1)^j k!}{(k-j)!(g(\theta_n)+1 -\frac{n}{2})^{j+1}}.\\
 \end{split}
 \end{equation}
For the terms $e^{-f(\theta_n)+ \frac{n}{2}}\left(\frac{f(\theta_n)}{\frac{n}{2}}\right)^{\frac{n}{2}}$ and $e^{-g(\theta_n)+ \frac{n}{2}}\left(\frac{g(\theta_n)}{\frac{n}{2}}\right)^{\frac{n}{2}}$, with the help of (\ref{ft1}) - (\ref{gt2}), we have
\begin{equation}
\begin{split}
 &\frac{e^{-f(\theta_n)+ \frac{n}{2}}\left(\frac{f(\theta_n)}{\frac{n}{2}}\right)^{\frac{n}{2}}}{e^{-g(\theta_n)+ \frac{n}{2}}\left(\frac{g(\theta_n)}{\frac{n}{2}}\right)^{\frac{n}{2}}}\\
= &\frac{e^{-\frac{1}{2}nx}\left(1 + x \right)^{\frac{n}{2}} }{  e^{\frac{1}{2}ny}\left(1 - y\right)^{\frac{n}{2}}} \\
=& e^{-\frac{1}{2}n(x + y)}\left(\frac{1+x}{1-y}\right)^{\frac{n}{2}} \\
= & 1.
\end{split}
\end{equation}

\end{document}